\newtheorem{theorem}{Theorem}
\newtheorem{lemma}{Lemma}
\newtheorem{proposition}{Proposition}
\newtheorem{corollary}{Corollary}
\newtheorem{definition}{Definition}
\newtheorem{remark}{Remark}
\newtheorem{assumption}{Assumption}
\def\R{\mathbb{R}}
\def \P{\mathbb{P}}
\newcommand{\E}{\mathbb{E}}
\newcommand{\ua}{ \mbox{mCPA} }
\newcommand{\tcpa}{ \mbox{tCPA} }
\newcommand{\spa}{ \mbox{\tiny SPA} }
\newcommand{\mye}{ \mbox{\tiny MYE} }
\renewcommand{\b} {\bm{b}}
\newcommand{\q} {\bm{q}}
\renewcommand{\v} {\bm{v}}
\newcommand{\aranyak}[1]{\textcolor{orange}{Aranyak: {#1}}}
\newcommand{\andres}[1]{\textcolor{purple}{Andres: {#1}}}
\newcommand{\apnote}[1]{\textcolor{purple}{Andres: {#1}}}
\newcommand{\chris}[1]{}
\newcommand{\aranyak}[1]{}
\newcommand{\andres}[1]{}
\newcommand{\apnote}[1]{}
\begin{document}

\title{Auctions without commitment in the auto-bidding world}

\author{
    Aranyak Mehta\thanks{Google, \texttt{\{aranyak,perlroth\}@google.com}},
    Andres Perlroth\footnotemark[1]
}



\maketitle

\begin{abstract}
Advertisers in online ad auctions are increasingly using auto-bidding mechanisms to bid into auctions instead of directly bidding their value manually. One of the prominent auto-bidding formats is that of target cost-per-acquisition ($\tcpa$) which maximizes the volume of conversions subject to a return-of-investment constraint. From an auction theoretic perspective however, this trend seems to go against foundational results that postulate that for profit-maximizing (\emph{aka} quasi-linear) bidders, it is optimal to use a classic bidding system like marginal CPA ($\ua$) bidding rather than using strategies like $\tcpa$. 

In this paper we rationalize the adoption of such seemingly sub-optimal bidding within the canonical quasi-linear framework. The crux of the argument lies in the notion of \emph{commitment}. We consider a multi-stage game where first the auctioneer declares the auction rules; then bidders select either the $\tcpa$ or $\ua$ bidding format (and submit bids accordingly); and then, if the auctioneer lacks commitment, it can revisit the rules of the auction (e.g., may readjust reserve prices depending on the bids submitted by the bidders).
Our main result is that so long as a bidder believes that the auctioneer lacks commitment to follow the rule of the declared auction then the bidder will make a higher profit by choosing the $\tcpa$ format compared to the classic $\ua$ format. 

We then explore the commitment consequences for the auctioneer. In a simplified version of the model where there is only one bidder, we show that the $\tcpa$ subgame admits a \emph{credible} equilibrium while the $\ua$ format does not. That is, when the bidder chooses the $\tcpa$ format the auctioneer can credibly implement the auction rules announced at the beginning of the game. We also show that, under some mild conditions, the auctioneer's revenue is larger when the bidder to uses the $\tcpa$ format rather than $\ua$. We further quantify the value for the auctioneer to be able to commit to the declared auction rules. 
\end{abstract}
\providecommand{\keywords}[1]
{
  \small	
  \textbf{\textbf{Keywords:}} #1
}
\keywords{Auto-bidding, Auction Design, Mechanism Design, Credible mechanisms, Equilibrium, Economics}

\section{Introduction}

Over the past several years, advertisers have increasingly started to use auto-bidding mechanisms to bid in ad auctions instead of directly bidding their value manually (e.g., bidding per keyword in sponsored search). Among the prominent bidding strategies that have been adopted is the target cost per acquisition ($\tcpa$) strategy (see, e.g.,~\citet{googleautobiddingsupport,fbautobiddingsupport}) where the goal is to maximize the volume of conversions \emph{aka} acquisitions, subject to an upper bound on cost per conversion\footnote{A related strategy, target return-on-ad-spend (tROAS), maximized conversion value subject to a bound on the cost per value. Our results for \tcpa extend naturally to tROAS as well.}.

From an auction theoretic perspective however, this trend seems to go against foundational results that postulate that for profit-maximizers (\emph{aka} quasi-linear) bidders, it is optimal to use a classic bidding system (e.g., marginal CPA bidding, henceforth $\ua$) rather than using a $\tcpa$. In other words, for an advertiser with a quasi-linear utility functional, it is optimal to bid the marginal value in a truthful auction, while with $\tcpa$ bidding, an advertiser does not allow for a direct control over the marginal cost. So a natural question to ask is whether there is an intrinsic value for profit-maximizing bidders in using $\tcpa$ format? {Put simply, why do advertisers adopt $\tcpa$ bidding?}

One explanation given for the use of $\tcpa$-like bidding formats is that in many cases bidders may not be intrinsically profit-maximizers which is the classic assumption in economics, but instead care about high-level goal metrics, such as value maximization under return-of-investment constraints (\citet{AggarwalBM19,BalseiroDMMZ21}). Another related explanation is that the cognitive cost to bid and verify the outcome using $\tcpa$-like formats is low compared to other methods like fine-grained bidding or $\ua$.

In this paper, instead of reconsidering the profit-maximization framework, 
we show that even within this canonical framework we can explain the adoption of formats like $\tcpa$ bidding under a certain model and setting. \emph{In other words, the paper rationalizes profit-maximizers bidders adopting $\tcpa$ mechanisms.} We show that so long as a bidder believes that the auctioneer lacks commitment to follow the rule of the declared auction (e.g., readjusts reserve prices after bidders submit their bids), then the bidder prefers to use the $\tcpa$ format over the classic $\ua$ format.
\medskip

\noindent{\bf Model:} In our model (Sec.~\ref{sec:model} for details), there is a set of queries (ad slots) to be sold, and there is a \emph{single-shot} game between the auctioneer (auctioneer) and the bidders (advertisers). The auctioneer declares that the auction is a per-query second-price auction with a declared reserve-price, and then the bidders choose a bidding format (whether $\tcpa$ or $\ua$) and corresponding bids. After this, the auctioneer could then potentially deviate from the declared auction by readjusting the reserve prices, at potentially a per-query, personalized-per-bidder level. Finally, the outcomes are revealed.\footnote{{We note that our model is oblivious to the particular auto-bidding algorithms that auto-bidder uses so long as the optimal outcomes are reached.} }
    
Prominent in our formulation is the notion of \emph{commitment}. In a model \emph{with commitment} we assume that there is some mechanism by which the bidders can trust that the auction declared by the auctioneer will indeed be the one that is implemented. This is the standard model and assumes that there is some form of auditing available. In contrast, in a model \emph{without commitment}, there is no such mechanism or guarantee, and both the auctioneer and the bidders strategize their actions.  
\medskip

\noindent{\bf Results:}
At a high level, our main result is that if a bidder believes the auctioneer does not have commitment, then it prefers to use the $\tcpa$ bidding format over the $\ua$ format. \footnote{{While our model is in a single-shot game, we note that even in the context of repeated interactions where a bidder can monitor the outcome of the auctions over time, our theory can still directionally explain why some bidders may still prefer a $\tcpa$-format. For example, when a bidder is risk-averse about the auctioneer breaking commitment (e.g., bidder expects the auctioneer to slightly change the auction rules which cannot be observable for the bidder); or for a less-sophisticated bidder to whom is too costly to monitor their outcomes.}}

We begin (in Sec.~\ref{sec:3}) with a basic setting which captures much of the intuition and results in an instructive manner. We consider a 
model in which there is only one buyer in the auction facing an exogenous price landscape for the queries, e.g., a per-query floor set by the publishers who own the corresponding ad slots. In this model, we are able to derive not only the main result that the bidder prefers $\tcpa$ bidding (Theorem~\ref{th:1}), but also sharp results about the revenue implications for the auctioneer. We prove that
there is a \emph{credible} equilibrium (as defined in Sec. \ref{sec:3}) in which the auctioneer declares a reserve price of 0, and sticks to it, while the bidder chooses the $\tcpa$ format.
Under a mild assumption, this equilibrium is also shown to be beneficial to the auctioneer in terms of revenue as well as efficiency (Theorem~\ref{prop:rev_comp}).
        
We further study the case where the auctioneer has commitment to a reserve price, but the bidder mistakenly believes it doesn't. We show that, in this case, there is an instance where the revenue loss can be arbitrarily large (Prop.~\ref{prop:pr}). 


We continue to the general model with multiple bidders in Sec.~\ref{sec:gen}.
While we are able to solve for the equilibrium in the model of Sec.~\ref{sec:3}, solving the $\tcpa$ equilibrium in the general model with multiple bidders is hard,\footnote{\cite{multichannel} show that it is PPAD-hard to find an equilibrium when there are multiple $\tcpa$ bidders.} and studying the auctioneer's revenue implications seems intractable. Nonetheless, we show that a profit-maximizing bidder who believes that the auctioneer does not have commitment prefers the $\tcpa$ format to $\ua$ format (Theorem~\ref{th:2}). Our result heavily relies on the auctioneer's flexibility to readjust reserve prices at a per-query and bidder level reserve. For instance, when the set of queries is too large so readjusting at the query level is too expensive, the auctioneer is constrained to set reserve prices uniformly across queries. For this situation, we provide a companion result showing that under some technical assumption on the symmetry of the game (defined in Sec.~\ref{sec:gen}), that again a bidder prefers the $\tcpa$ format over the $\ua$ format (Theorem~4).

\medskip

\noindent{\bf Intuition:} To understand the intuition of our main result, that a bidder prefers the $\tcpa$ format over the $\ua$ format, let us decompose the auctioneer's objective -- to maximize revenue -- into two elements: the volume of queries the auctioneer sells and the price at which the queries get sold. These two components translate into two economic forces that the auctioneer considers at the moment of readjusting the reserve prices: (i) to induce a high marginal bid from the bidder (so to increase the volume of queries sold) and (ii) to set a reserve-price on the queries as closed as possible to the bidder's marginal bid. 
    Observe that force (i) is aligned with bidder's incentive while force (ii) goes in the opposite direction in regard to the bidder's utility.
    
In the $\ua$ format, force (i) disappears as the marginal bid is chosen by the bidder. Thus, the auctioneer's incentive in the no-commitment game is simply to price at the marginal bid. 
    
In the $\tcpa$ format, as the marginal bid is not fixed but rather is chosen by the auctioneer (given the cost per acquisition constraint), force (i) does not disappear, and hence, it mitigates the second force effect on the bidder's utility. This makes the bidder's utility higher in the $\tcpa$ compared to the $\ua$ format\footnote{In the specific model of Sec. \ref{sec:3}, we show that the former force completely dominates the latter, and the incentives of the auctioneer and the bidder are completely aligned.}. 
    
\begin{remark}[On interpreting the results]
Our model is that of a stylistic single-shot interaction between the two parties. There are some marketplaces, including ad auctions, where there is a repeated interaction between the bidders and the auctioneer that alleviates the commitment problem that we study here.
For example, with transparent reporting and avenues for experimentation, a bidder can monitor and verify or audit the auctioneers actions and commitments over time even with the $\ua$ format.
Furthermore, in practice, there exist certification processes and third-party audits such as Sarbanes-Oxley (SOX) compliance. Such audits are another method for bidders to have confidence that real-word auction systems are behaving as  described. We do not consider the repeated setting in this paper, nor the considerations of audits as a commitment device.
\end{remark}

\subsection{Related Work}
With the importance of auto-bidding in the industry, the topic has become of increasing interest in the literature. In a series of papers, the problem of formulation of $\tcpa$-like auto-bidding formats has been introduced and studied in various aspects. \citet{AggarwalBM19} introduced an optimization framework for auto-bidding, provided optimal bidding algorithms, and also studied the price of anarchy, i.e., the loss in efficiency due to $\tcpa$-like bidding, which was further  explored in subsequent work \citep{DengMMZ21,lmp22,yuanfpa,balseironeurips,mehta22}. There has also been recent work on understanding the optimal mechanism design in Bayesian settings~\cite{GolrezaiLP21,BalseiroDMMZ21} for ROI constrained bidders. The above papers consider the advertisers as having non-standard ROI utility functions in the main; \citet{BalseiroDMMZ21} explicitly distinguish between utility functions which are volume maximizing (ROI) versus profit maximizing. These works do not consider the question of how to rationalize the use of the $\tcpa$ format, but rather study additional consequences and auction design given an exogenous choice of the format. We can interpret our work as endogenizing the choice of format within the larger context between auction and bidding.

We mention that in a sense $\tcpa$ auto-bidding generalizes the more well-known budget constrained or financially-constrained model. This has been a very well studied topic with several lines of work studying this model, both from a fundamental perspective, and in the context of ad auctions, e.g.,~\cite{Laffont96optimal,Che2000TheOM, FeldmanMPS07,BalseiroG19, gaitondeetal23, tardosarxiv} among many others.

From an auction design perspective, our paper relates to three streams of the literature. The closest papers to our work consider a mechanism design problem with imperfect commitment where the auctioneer (designer) can readjust the rules after observing the agent's report. \cite{str1,str2}. Our paper differs from those previous results in that we study a multi-item problem (multiple queries) and focus on two main auction rules rather than a general mechanism design approach. 

Related to the first stream, \citet{akb, dask, wein} study the auction rules that are credible. That is, from the auctioneer's perspective, it is optimal to follow the rules of the declared auction. We contribute to this field by showing that, under some conditions, the $\tcpa$-mechanism is a credible mechanism for the multi-item cases (see Prop.~\ref{prop:6}).\footnote{\citet{dask} also construct a credible auction mechanism with good revenue guarantees.}

The third stream of papers study dynamic auctions where the auctioneer cannot commit to the auction that she will choose in future periods~\cite{gul,fud, skr1, skr2, liu}. These papers show that without commitment the auctioneer cannot obtain a revenue larger than the revenue from the efficient auction. While our work does not consider repeated auctions, our results also show that, under some circumstances, the lack of commitment limits the auctioneer revenue to the efficient outcome (see Prop.~\ref{prop:6}). 
 \section{Model}\label{sec:model}

Our model considers a platform (henceforth, the auctioneer) using a second price auction to sell $x\in X$ queries, where $X$ belongs to a measurable space $(X,\mathcal{A},\mu)$. Because our goal is to study the buyers' behavior under different auto-bidding mechanisms, we fix one of the buyers, henceforth the bidder, and assume that she has a private valuation $v\sim F$ for each query.

From the bidder's perspective, the cost per conversion on query $x$, $p(x)$, has two components: the {\em intrinsic} value of the query, which is given by $p_0(x)$ and the reserve price $r$ chosen by the auctioneer.\footnote{We assume that the intrinsic value $p_0:X\to \R_+$ is measurable function.} This intrinsic price function is quite general to include the case where the pricing comes from other buyers participating in the auction as well as pricing constraints set by the publisher {(see Sec.~\ref{sec:gen} for more details)}.\footnote{Our main result also includes the case where $p_0(x)$ may be unknown to the bidder.} Therefore, the bidder's payoff when she buys the queries with prices lower than $p$ is
$$ u(p|v) = \begin{cases}
0 &\mbox{ if } p< r \\
(v-r) H(r) + \int_{r}^p (v-z)dH(z) &\mbox{ if } p\geq r
\end{cases},
$$
where $H(p)= \mu(\{p_0(x)\leq p\})$ is the volume of queries with intrinsic price less than $p$.

We impose the following regularity condition on $H$.

\begin{assumption}\label{ass:0}
The price distribution $H$ is an integrable function (i.e., $\int_0^\infty H(z)dz<\infty$) and  is continuously differentiable with $h(p)=H'(p)>0$ (positive {\em density}).
\end{assumption}



\subsection{Auto-bidding mechanisms}

In order to bid for the queries $X$, the bidder has to choose among the different auto-bidding mechanisms the auctioneer offers to her. We consider two representative and commonly-used mechanisms. 

{\bf Marginal CPA}: The bidder submits a bid $b\in \R_+$ and the auto-bidding system bids $b$, on her behalf, on each of the queries $x\in X$. Thus, when the bid $b>\max\{r,p_0(x)\}$ the bidder receives query $x$ for a price $\max\{r,p_0(x)\}$. We conclude that the bidder's payoff is 
$$u(b|v; \ua) = \begin{cases}
    0 &\mbox{ if } b<r\\
    (v-r)H(r) + \int_r^b(v-z) h(z)dz &\mbox{ if } b\ge r
    \end{cases}.$$

{\bf TCPA:} The bidder submits a target cost per acquisition $T\geq 0$, and the auto-bidding system bids $b(T)$ in each of the queries $x\in X$ so that it maximizes the volume of queries subject to the average cost being no more than $T$. Thus, for $T>r$ we have that \begin{align}\label{eq:0} b(T;r) \in \arg \max \Big\{& H(b(T;r))|\mbox{ s.t. } r H(r) +\\
& \int_r^{b(T;r)} z h(z)dz \leq T \cdot H(b(T;r))\Big\}. \nonumber
\end{align}
The following lemma, whose proof is deferred to the appendix, shows that the above problem has a unique solution and the $\tcpa$ constraint is always binding.
\begin{lemma}\label{lem:1}
There is a unique solution $b(T;r)$ to Problem~\eqref{eq:0} which is the unique solution to equation \begin{equation}\label{eq:1} H(b)(b  - T) = \int_r^{b} H(z)dz.
\end{equation} Furthermore, for $T\geq r$, $b(T;r)$ is greater or equal than $T$, increasing in $T$ and decreasing in $r$.
\end{lemma}

From Lemma~\ref{lem:1}, we observe that given a target $T\geq r$, the bidder pays $T H(b(T;r))$ for the $H(b(T;r))$ queries she receives. We conclude that the bidder's payoff is 
    $$ u(T|v;\tcpa) = \begin{cases}
    0 &\mbox{ if } T<r\\
  (v-T)H(b(T;r)) &\mbox{ if } T\ge r
    \end{cases}.$$

\subsection{Credibility and commitment}

To model the auctioneer's credibility we consider the following four-stage game. 
\begin{enumerate}
    \item[(S1)] {\em Announcement:} the auctioneer announces a reserve price which applies for all queries $x\in X$.
    \item[(S2)] {\em Bidding:} the bidder chooses an auto-bidding mechanism and submits the bid to the auto-bidder, either marginal or target, accordingly.
    \item[(S3)] {\em Credibility:} the auctioneer potentially readjusts the reserve price, at a per-query and advertiser level.\footnote{{We restrict the auctioneer to use {\em reasonable} mechanisms where it can only modify reserve prices or bids (from competitors), similar to the model studied in \cite{akb}.}}$^,$\footnote{For the one bidder model of Sec.~\ref{sec:3}, it suffices to restrict to uniform readjustment of the reserve across all queries.} 
    \item[(S4)] {\em Auction is realized:} The auto-bidding system makes the per-query bids, and the final allocations and respective payments accrue.
\end{enumerate}

The third stage of the game is the key element for our analysis. We say that the auctioneer is credible when reserve prices do not change at S3. This could either be because the auctioneer has {\em commitment}, which means that the auctioneer commits not to change reserve prices (i.e. the game does not have S3); or because it has {\em endogenous credibility}, which means that along the equilibrium path it is optimal for the auctioneer not to change reserve prices. To distinguish between these two reasons, we denominate as the {\bf commitment game} the game without S3 and the {\bf no commitment game} as the game including S3.

The solution concept used in this paper is perfect Bayesian equilibrium.\footnote{Observe that for the auctioneer, once the bidder submits the bid, any belief about the bidder's type is irrelevant.}

 \section{One bidder model}\label{sec:3}

The purpose of this section is to distill, in an instructive manner, the main insight of our work by studying the simplest case: the bidder is the only buyer interested in the queries.\footnote{Thus, the intrinsic price $p_0(x)$ does not come from other bids, but instead it corresponds to particular constraints that publishers (owners of the queries) set on their queries.} The tractability of this model, which contrasts with the general model, also allows us to study the revenue consequences of lack of commitment on the auctioneer.

In this model, the auctioneer's revenue only comes from the queries sold to the bidder.\footnote{More generally, the auctioneer's revenue is a share of the transaction (the other fraction goes to the respective publisher). Our results easily extend to this general setting.} Therefore, given a bid of the bidder in either bidding format, the auctioneer's revenue is
\begin{align*}
    \pi (r|b;\ua) &= \begin{cases}
    0 & \mbox{ if } b<r \\
    r H(r) + \int_r^b z h(z)dz &\mbox{ if } b \geq r
    \end{cases}, \\
    \pi (r|T;\tcpa) &= \begin{cases}
    0 & \qquad\qquad  \mbox{ if } T<r \\
    T H(b(T;r)) & \qquad  \qquad\mbox{ if } T \geq r
    \end{cases}.
\end{align*}

\subsection{Commitment Game}

We first study the model where the auctioneer commits not to change the reserve prices after observing the bid from the bidder. In this situation, the auctioneer's problem resembles the classic optimal auction studied in \citet{myerson1981optimal}. 

Following the expository spirit of this section, we further simplify the analysis by imposing a standard regularity condition on the distribution $F$.

\begin{assumption}
The virtual valuation $\phi_F(v)= v - \frac{1-F(v)}{f(v)}$ is increasing.
\end{assumption}

\begin{proposition}\label{prop:1}
For the commitment game, the revenue-maximizing mechanism is $(\chi^*,\tau^*)$ 
\begin{align*}
 \chi^*(v)= \begin{cases}
0 &\mbox{ if } v < r_{\mye} \\
v &\mbox{ if } v\geq r_{\mye}
\end{cases} \qquad 
\tau^*(v) = r_{\mye} H(r_{\mye}) +\int_{r_{\mye}}^v z h(z)dz
\end{align*}
where $r_{\mye}= \phi_F^{-1}(0)$. Here, the allocation function $\chi(v)$ and transfer function $\tau(v)$ mean that the bidder receives queries $\{x: p_0(x)\leq \chi(v)\}$ for a payment of $\tau(v)$. 
\end{proposition}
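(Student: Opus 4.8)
The plan is to treat this as a single-agent optimal mechanism design problem in the spirit of \citet{myerson1981optimal} and characterize the revenue-maximizing $(\chi,\tau)$. By the revelation principle I restrict attention to direct mechanisms that are incentive-compatible (IC) and individually rational (IR). Since the bidder values every query identically at $v$ and queries differ only through their floor $p_0(x)$, a general allocation can be summarized by the measure of queries awarded, and I would first argue that it is without loss to use a threshold rule $\chi(v)$ that awards exactly the queries with $p_0(x)\le \chi(v)$: among all allocations delivering a given quantity, awarding the lowest-floor queries most relaxes feasibility (and, interpreting floors as a lower bound on the transacted price, minimizes what must be paid out per query). Write $Q(v)=H(\chi(v))$ for the quantity of queries won and $U(v)=vQ(v)-\tau(v)$ for the bidder's interim utility.

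Next I apply the standard single-dimensional envelope analysis. IC is equivalent to $Q$ being nondecreasing together with the identity $U(v)=U(\underline v)+\int_{\underline v}^{v}Q(s)\,ds$, and IR together with revenue maximization pins down $U(\underline v)=0$. Substituting $\tau(v)=vQ(v)-U(v)$ and integrating by parts converts expected revenue into expected virtual surplus, $\E[\phi_F(v)\,Q(v)]=\E[\phi_F(v)\,H(\chi(v))]$. The problem thus reduces to choosing the threshold function $\chi(\cdot)$ to maximize $\E[\phi_F(v)\,H(\chi(v))]$ subject to monotonicity of $Q$.

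The key model-specific step, which I expect to be the main obstacle, is the feasibility constraint $\chi(v)\le v$: the auctioneer should never award the bidder a query whose floor exceeds $v$. I would establish this from the per-query price floor together with IR — a query with floor $z$ can only be transacted at a price at least $z$, while a type-$v$ bidder's marginal value is exactly $v$, so handing over queries with $z>v$ cannot be part of a revenue-optimal feasible mechanism. Making this airtight requires ruling out that some cross-query bundling scheme makes it worthwhile to transfer high-floor queries, and this is the delicate point; once it is in place the optimization has the clean constraint set $\chi(v)\in[0,v]$.

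Given this constraint I optimize pointwise. Because $H$ is increasing, $\phi_F(v)H(\chi(v))$ is increasing in $\chi(v)$ when $\phi_F(v)>0$ and decreasing when $\phi_F(v)<0$, so the optimum sets $\chi^*(v)=v$ whenever $\phi_F(v)\ge 0$ and $\chi^*(v)=0$ otherwise. Assumption~1 (monotone $\phi_F$) makes the sign change occur at the single cutoff $r_{\mye}=\phi_F^{-1}(0)$ and guarantees that the induced $Q^*(v)=H(v)\,\mathbf{1}\{v\ge r_{\mye}\}$ is nondecreasing, so the monotonicity constraint is slack and no ironing is needed. Finally I recover the transfer from $\tau(v)=vQ(v)-\int_{\underline v}^{v}Q(s)\,ds$; integrating by parts over $[r_{\mye},v]$ collapses this to $r_{\mye}H(r_{\mye})+\int_{r_{\mye}}^{v}z\,h(z)\,dz$, which is exactly $\tau^*$, and I note as a consistency check that this equals $\pi(r_{\mye}\,|\,v;\ua)$ — i.e.\ the optimal mechanism is precisely the second-price rule with the Myerson reserve under truthful marginal bidding.
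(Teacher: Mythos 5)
Your overall architecture matches the paper's: revelation principle, envelope characterization of incentive compatibility, integration by parts to reduce expected revenue to the virtual surplus $\E[\phi_F(v)H(\chi(v))]$, pointwise maximization under the constraint $\chi(v)\le v$ using monotonicity of $\phi_F$, and recovery of $\tau^*$ by integration by parts. However, the one step you yourself single out as ``the delicate point'' --- establishing $\chi(v)\le v$ --- is left open, and the route you sketch for it (per-query floors plus IR) would not close it. Feasibility only forces the total payment to cover the total floors, $\tau(v)\ge\int_0^{\chi(v)}z\,h(z)\,dz=\chi(v)H(\chi(v))-\int_0^{\chi(v)}H(z)\,dz$, so IR, i.e.\ $\tau(v)\le vH(\chi(v))$, only yields $(v-\chi(v))H(\chi(v))+\int_0^{\chi(v)}H(z)\,dz\ge 0$, which is perfectly compatible with $\chi(v)>v$ (e.g.\ floors uniform on $[0,1]$, $\chi(v)=1$, $v=0.6$). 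A bundle containing high-floor queries can be individually rational in aggregate even though each marginal query is ``overpriced,'' so the marginal argument cannot be run through IR, and since the auctioneer keeps the full transaction the constraint also cannot come from optimality alone --- absent it, any type with $\phi_F(v)>0$ would be handed every query.

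The paper closes this step using IC rather than IR. It states the floor condition incrementally across types, $\tau(v)-\tau(v')\ge\int_{\chi(v')}^{\chi(v)}z\,h(z)\,dz$ for $v>v'$, and then differentiates: monotonicity of $\chi$ gives a.e.\ differentiability, the envelope formula gives $\tau'(v)=v\,h(\chi(v))\chi'(v)$, while the floor condition gives $\tau'(v)\ge\chi(v)\,h(\chi(v))\chi'(v)$; comparing the two yields $\chi(v)\le v$ a.e. This is precisely the formal version of your intuition that the marginal transfer equals the marginal value $v$ and must cover the marginal floor $\chi(v)$ --- but the transfer that matters is the one pinned down by incentive compatibility, not the looser IR bound. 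With that lemma in place your remaining steps go through as written; you should also, as the paper does, verify at the end that the candidate $(\chi^*,\tau^*)$ solving the relaxed problem satisfies the floor constraint, which your closing consistency check against the second-price payment with reserve $r_{\mye}$ essentially accomplishes.
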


Proposition~\ref{prop:1} characterizes the revenue-maximizing policy among all feasible auction and bidding formats (in particular, including the $\ua$ and $\tcpa$ formats). We defer the proof to Appendix~\ref{app:simple}. 

The next proposition shows that the auctioneer can implement this optimal mechanism in both formats, and hence, making them equivalent. 

\begin{proposition}\label{prop:2}
In the commitment model, the $\ua$ and $\tcpa$ mechanisms are equivalent along the equilibrium path. More precisely, in any equilibrium, the auctioneer sets a reserve price $r_{\mye}$; the bidder either bids $b=v$ on the $\ua$ mechanism or $T$ in $\tcpa$ mechanism such that $b(T;r_{\mye})= v$. 

\noindent In particular, we obtain that in the commitment game:
\begin{itemize}
    \item[(i)] the auctioneer's revenue: \\$\pi_C^*= \E_v[\mathbf{1}_{\{v\ge r_{\mye}\} } \left(vH(v)-\int_{r_{\mye}}^v H(z)\right)dz]$,
    \item[(ii)] the bidder's utility: $u_{C}^* = \E_v[\mathbf{1}_{\{v\ge r_{\mye}\} } \int_{r_{\mye}}^v H(z)dz]$,
    \item[(iii)] the welfare: $W_C= \E_v[\mathbf{1}_{\{v\ge r_{\mye}\} }vH(v)]$.
    \end{itemize}
\end{proposition}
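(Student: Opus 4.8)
The plan is to show that in either format a profit-maximizing bidder facing a committed reserve $r$ is effectively choosing an \emph{induced marginal bid}, that the payment attached to any given induced marginal bid is identical across the two formats, and hence that the bidder solves the same one-dimensional problem and selects the same bid in both. The auctioneer, anticipating this common best response, then faces exactly the Myerson screening problem, so Proposition~\ref{prop:1} pins the optimal commitment to $r_{\mye}$ and makes the two subgames coincide on path.

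First I would analyze the $\ua$ subgame. With $r$ fixed, a bid $b\ge r$ wins the mass $H(b)$ of queries and, by the stated revenue formula, costs $rH(r)+\int_r^b z\,h(z)\,dz$, so the bidder's quasi-linear utility and its derivative are
\begin{equation*}
u(b)=v\,H(b)-\Big(rH(r)+\int_r^b z\,h(z)\,dz\Big),\qquad u'(b)=(v-b)\,h(b).
\end{equation*}
As $u'$ is positive for $b<v$ and negative for $b>v$, the unique best response is $b=v$ whenever $v\ge r$, and opting out ($u=0$) when $v<r$.

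Next I would treat the $\tcpa$ subgame by reparametrizing the target through the induced marginal bid. Given $r$, the binding tCPA constraint defines $b(T;r)$ through $T\,H(b(T;r))=rH(r)+\int_r^{b(T;r)} z\,h(z)\,dz$; reading this the other way, to induce a marginal bid $b$ the bidder must set $T(b)=\big(rH(r)+\int_r^{b} z\,h(z)\,dz\big)/H(b)$, at which the total payment $T(b)H(b)$ equals precisely the second-price total $rH(r)+\int_r^{b} z\,h(z)\,dz$ paid in the $\ua$ format. Thus the bidder's utility, expressed as a function of the induced marginal bid, is exactly the $u(b)$ above, the optimal induced bid is again $b=v$, and this is achieved by the target with $b(T;r)=v$. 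The main obstacle is the careful justification that choosing $T$ is equivalent to choosing $b$: I would need to show $T\mapsto b(T;r)$ is a well-defined increasing bijection on the relevant range and that the constraint binds at the profit-maximizing point, which is where positivity of $h$ (and regularity of the landscape $H$) enters.

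With $b=v$ (for $v\ge r$) established in both formats, the auctioneer's induced direct mechanism is $\chi(v)=v\,\mathbf{1}_{\{v\ge r\}}$ with transfer $\big(rH(r)+\int_r^v z\,h(z)\,dz\big)\mathbf{1}_{\{v\ge r\}}$; this is feasible, so Proposition~\ref{prop:1} together with Assumption~1 (strict regularity, giving a unique maximizer) forces $r=r_{\mye}$ and makes the on-path outcome equal $(\chi^*,\tau^*)$ in both subgames. Finally I would obtain (i)--(iii) by integration by parts: since $\int_{r_{\mye}}^v z\,h(z)\,dz=vH(v)-r_{\mye}H(r_{\mye})-\int_{r_{\mye}}^v H(z)\,dz$, the on-path payment is $vH(v)-\int_{r_{\mye}}^v H(z)\,dz$, giving the revenue in (i); the bidder's utility (value minus payment) collapses to $\int_{r_{\mye}}^v H(z)\,dz$, giving (ii); and welfare, the sum of the two payoffs, equals the gross value $vH(v)$ because the transfer cancels, giving (iii). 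Taking expectations over $v$ against the participation indicator completes the argument.
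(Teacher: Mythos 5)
Your proposal is correct and follows essentially the same route as the paper's proof: with a committed reserve the bidder's optimal induced marginal bid is $b=v$ in either format (the target being just an indirect way to select that bid at the same total payment), and Proposition~\ref{prop:1} then pins the reserve to $r_{\mye}$ and yields (i)--(iii) via integration by parts. You simply make explicit two steps the paper leaves implicit --- the first-order-condition computation $u'(b)=(v-b)h(b)$ and the payment-equivalence under the reparametrization $T\mapsto b(T;r)$ (which the paper's Lemma~\ref{lem:1} supplies) --- so no substantive difference.
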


\begin{proof}
First, notice that since the auctioneer does not change the reserve price in the commitment game, the optimal bidding strategy for the bidder is to submit a marginal bid equal to her value. She can do this either directly using the $\ua$-format or indirectly, in the $\tcpa$ format by submitting a target $T$ that induces the same marginal bid.
Thus, from the bidder's perspective, the $\ua$ format and the $\tcpa$ format are equivalent.
Hence, from the auctioneer's perspective he can implement the optimal mechanism of Proposition~\ref{prop:1} by announcing a reserve price of $r_{\mye}$ at S1. 
From Proposition~\ref{prop:1} we have that the bidder pays $T^*(v)= \mathbf{1}_{\{v\ge r_{\mye}\} } \left( vH(v)- \int_{r_{\mye}}^v H(z)\right)$. The remaining claims of the proposition are direct consequence of this characterization.
\end{proof}

\subsection{No-commitment game}

This section shows that when the auctioneer lacks commitment, the two auto-bidding mechanisms are no longer equivalent. We show that if the bidder chooses the $\ua$ format, the final auction turns to be equivalent to a first-price auction (FPA). By contrast, when the bidder opts for the $\tcpa$ format, the final auction turns to be equivalent to a second-price auction without reserve (SPA). \medskip 

{\noindent \bf The $\ua$ Subgame}\medskip

We first characterize the set of equilibria for the subgame where the bidder chooses the $\ua$ format. We present the following straightforward lemma and a direct consequence of the lemma in Proposition~\ref{prop:3}.

\begin{lemma}\label{lem:2}
Consider the subgame where the bidder chooses the $\ua$ format and bids $b$. Then, if the reserve price $r$ announced at S1 is such that $r\neq b$, then the optimal decision for the auctioneer is to readjust the reserve price to $r=b$. 
\end{lemma}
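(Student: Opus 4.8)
The plan is to fix the bidder's marginal bid $b$ and treat the auctioneer's revenue $\pi(r\mid b;\ua)$ purely as a function of the reserve $r$, then argue that it is maximized at $r=b$. I would split the analysis at $r=b$ using the given piecewise formula: for $r>b$ the bidder's marginal bid falls below the reserve, so she wins nothing and revenue is $0$; for $r\le b$ revenue is $rH(r)+\int_r^b z\,h(z)\,dz$. It therefore suffices to show that on $[0,b]$ the revenue is nondecreasing in $r$, since then pushing the reserve up to (but not past) $b$ is optimal.

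For the monotonicity on $[0,b]$ I would simply differentiate. By the product rule and Leibniz's rule,
\[
\frac{\partial}{\partial r}\,\pi(r\mid b;\ua)=\frac{d}{dr}\big(rH(r)\big)+\frac{d}{dr}\int_r^b z\,h(z)\,dz=\big(H(r)+rh(r)\big)-rh(r)=H(r)\ge 0 .
\]
Hence $\pi(\cdot\mid b;\ua)$ is nondecreasing on $[0,b]$, attains its maximum over that interval at $r=b$, and then drops to $0$ for $r>b$; so $r=b$ maximizes revenue. In fact integrating the derivative gives $\pi(b\mid b;\ua)-\pi(r\mid b;\ua)=\int_r^b H(s)\,ds$, which is strictly positive whenever $H(b)>0$, so the readjustment to $r=b$ is strictly optimal in every nondegenerate case.

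The underlying intuition, which I would state to make the computation transparent, is a ``no lost sales'' coupling argument: raising the reserve anywhere in $[0,b]$ never changes the set of queries the bidder wins (she still takes every query with floor at most $b$, since her marginal bid $b$ is held fixed), while it weakly raises the price paid on every query sold (each sold query now pays $\max(r,p_0(x))$, which is increasing in $r$). With the volume of sales pinned down, the auctioneer's only lever is price, and she exploits it fully by moving the reserve all the way up to the marginal bid; this is precisely the setting in which ``force (i)'' from the introduction is absent. I do not expect a genuine obstacle here, since the lemma is essentially a one-variable optimization; the only points needing mild care are the boundary and degenerate cases (ensuring $r=b$ beats $r>b$, which holds as long as $bH(b)>0$, and confirming strictness when $H$ is flat on part of $[0,b]$), both of which are settled by the explicit difference $\int_r^b H(s)\,ds$ above.
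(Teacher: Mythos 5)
Your proof is correct and is exactly the argument the paper has in mind (the paper omits the proof as ``straightforward''): for $r>b$ revenue is $0$, while on $[0,b]$ the revenue $rH(r)+\int_r^b z\,h(z)\,dz$ has derivative $H(r)\ge 0$ in $r$, so it is maximized at $r=b$. The explicit gap $\pi(b\mid b;\ua)-\pi(r\mid b;\ua)=\int_r^b H(s)\,ds$ and the ``no lost sales'' reading are both consistent with the paper's formula for $\pi(r\mid b;\ua)$ and its discussion following Proposition~3.
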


\begin{proposition}[$\ua$ equivalent to FPA]\label{prop:3}
The auction where the bidder chooses the $\ua$ format is equivalent to a FPA. Thus, for each valuation type $v$ the bidder submits a bid $b^*(v)$ solving
\begin{equation}\label{eq:fpa} 
\max_{b} (v-b)H(b).
\end{equation}
\end{proposition}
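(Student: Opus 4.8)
The plan is to show that in the no-commitment $\ua$ subgame, the bidder's optimal behavior reduces exactly to solving the first-price-auction bidding problem \eqref{eq:fpa}. The key driver is Lemma~\ref{lem:2}, which says that whatever marginal bid $b$ the bidder submits, the auctioneer (lacking commitment) will readjust the reserve price to $r=b$ after seeing the bid. So I would first invoke Lemma~\ref{lem:2} to pin down the auctioneer's best response: given any submitted bid $b$, the equilibrium reserve is $r=b$, regardless of the reserve $r$ announced at stage S1.

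Next I would compute the bidder's realized utility under this auctioneer response. With reserve $r=b$ and marginal bid $b$, the bidder wins exactly the queries whose floor satisfies $p_0(x)\le b$, a measure $H(b)$ of queries, and pays the reserve $b$ on each of them. Hence for a bidder of type $v$ the per-query surplus is $v-b$ and the total utility is $(v-b)H(b)$. This is precisely the FPA objective: in a first-price auction a winning bid of $b$ is paid in full, so the bidder trades off the per-unit margin $v-b$ against the winning probability (here, volume) $H(b)$. I would then observe that the bidder, anticipating the auctioneer's readjustment, chooses $b$ to maximize this expression, giving $b^*(v)\in\argmax_b (v-b)H(b)$, which is exactly \eqref{eq:fpa}.

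To make the equivalence claim fully precise I would close the loop on the equilibrium: the announced reserve at S1 is irrelevant because the bidder correctly forecasts the readjustment to $r=b$, so the bidder's optimization is unaffected by the declared rule, and the auctioneer's readjustment is a best response by Lemma~\ref{lem:2}; thus the strategy profile where the bidder plays $b^*(v)$ and the auctioneer sets $r=b$ constitutes an equilibrium, and its outcome coincides with a first-price auction in which payment equals the bid.

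The main obstacle is not the algebra but ensuring the reduction is logically airtight in the no-commitment game: I must verify that the bidder has no profitable deviation once the auctioneer's readjustment is taken as given. In particular, I should check that the bidder cannot gain by submitting a bid exceeding her value or by trying to exploit the announced reserve, and that the payment interpretation ``pay $b$ on the measure $H(b)$ of won queries'' is exactly what results from the readjusted SPA-with-reserve-$=b$ mechanism (where reserve equals the only bid, so the clearing price is $b$ on every won query). Handling the boundary/tie cases and confirming that \eqref{eq:fpa} indeed admits a maximizer under the assumed regularity of $F$ (equivalently $H$) would complete the argument.
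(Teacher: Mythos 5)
Your proposal is correct and follows exactly the route the paper intends: the paper gives no separate proof of Proposition~\ref{prop:3}, presenting it as a direct consequence of Lemma~\ref{lem:2}, and your argument (auctioneer readjusts to $r=b$, so the bidder wins the measure $H(b)$ of queries at price $b$ each, hence maximizes $(v-b)H(b)$) is precisely that intended derivation. The additional care you take about the irrelevance of the announced reserve and the payment interpretation is consistent with, and slightly more explicit than, the paper's treatment.
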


The previous result describes the negative effect of the lack of commitment from the auctioneer. Under the $\ua$ format, the auctioneer learns how much the bidder is willing to pay for each query; therefore, the auctioneer's sequential rationality pushes him to charge such value to the bidder. Thus, the auctioneer cannot credibly commit to keeping any reserve price announced at S0. Anticipating this effect, the bidder shades the bid and by consequence, she gets fewer queries allocated to her compared to the commitment case. 

We formalize this discussion in the following proposition.

\begin{definition}
We say that an equilibrium is credible if the auctioneer does not change the reserve price after observing the bid.\footnote{{Our definition is similar to the self-enforcement agreement theory studied in \citet{aumann1990nash}.}}
\end{definition}

\begin{proposition}\label{prop:ua}
There does not exist a credible equilibrium such that the bidder chooses the $\ua$ format. Furthermore, all equilibria are payoff equivalent inducing expected payoffs $\pi^*({\ua}) = \E_v[b^*(v) H(b^*(v)) ]$, $u^*({\ua}) =\E_v[(v-b^*(v)) H(b^*(v)) ]$ and expected welfare $W_{\ua}= \E_v[v H(b^*(v)) ] $, where $b^*$ is the solution to Problem~\eqref{eq:fpa}. 
\end{proposition}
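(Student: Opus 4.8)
The plan is to prove the two assertions of Proposition~\ref{prop:ua} in sequence: first that no credible equilibrium with the $\ua$ format exists, and then that all equilibria are payoff-equivalent with the stated expressions. Both parts lean heavily on Lemma~\ref{lem:2} and Proposition~\ref{prop:3}, so the structure of the argument is essentially to read off the consequences of those results.

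For the non-existence of a credible equilibrium, I would argue by contradiction. Suppose an equilibrium is credible, i.e., by the definition just stated, the auctioneer does not change the reserve price after observing the bid. In any such equilibrium the bidder submits some bid $b$. If the announced reserve $r$ satisfies $r \neq b$, then Lemma~\ref{lem:2} says the auctioneer strictly prefers to readjust to $r = b$, contradicting credibility. So the only candidate is $r = b$ from the outset. The key step is then to show this is not sustainable either: when $r=b$ the bidder is charged exactly her marginal bid on every query she wins, so her utility is $(v-b)H(b)$ with the understanding that paying the reserve means zero surplus per query --- more precisely, pricing at $b$ makes the bidder indifferent to shading downward, and by the FPA characterization in Proposition~\ref{prop:3} the bidder's best response is $b^*(v) < v$ (strictly, under the regularity assumption), whereas a credible no-change outcome would require the auctioneer to leave $r$ below $b$, which Lemma~\ref{lem:2} forbids. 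The cleanest way to phrase it: credibility forces $r=b$, but Lemma~\ref{lem:2} shows that whenever the bidder bids any $b>0$ the auctioneer will price at exactly $b$, so there is no reserve the auctioneer announces at S0 that he will actually honor unless it already equals the submitted bid --- and since bids depend on $v$ which is the bidder's private type, no single announced reserve can equal $b(v)$ for all types. This type-dependence is the crux that rules out credibility.

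For the payoff-equivalence claim, I would invoke Proposition~\ref{prop:3} directly: every equilibrium of the $\ua$ subgame is outcome-equivalent to a FPA in which type $v$ bids $b^*(v)$ solving \eqref{eq:fpa}. Since the final auction is a FPA, the winning bidder pays her own bid $b^*(v)$ per query and is allocated exactly the mass $H(b^*(v))$ of queries whose intrinsic price lies below $b^*(v)$. The three payoff formulas then follow by taking expectations over $v$: the auctioneer collects $b^*(v)H(b^*(v))$, the bidder retains $(v - b^*(v))H(b^*(v))$, and total welfare is the value of all allocated queries, $vH(b^*(v))$. Because Proposition~\ref{prop:3} pins the equilibrium bid $b^*(v)$ as the (essentially unique, under regularity) maximizer of $(v-b)H(b)$, all equilibria induce the same $b^*$ and hence the same expected payoffs, giving payoff-equivalence.

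The main obstacle I anticipate is the credibility argument rather than the payoff computation, which is routine once Proposition~\ref{prop:3} is in hand. Specifically, I would need to be careful about the boundary and tie-breaking behavior: Lemma~\ref{lem:2} is stated for $r \neq b$, so I must handle the $r=b$ case by a separate indifference/sequential-rationality observation, and I must make explicit why the auctioneer's optimal readjustment being a function of the realized bid $b(v)$ precludes a \emph{single} committed reserve from being honored across all types. Formalizing ``the bidder anticipates the readjustment and therefore shades,'' already sketched in the paragraph preceding the proposition, into a watertight contradiction is the delicate part; everything downstream is bookkeeping.
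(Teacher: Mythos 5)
Your proposal follows essentially the same route as the paper: credibility plus Lemma~\ref{lem:2} forces the announced reserve to equal the submitted bid, and the impossibility comes from the fact that the equilibrium bid is type-dependent while the announced reserve is a single number. The payoff computation via the FPA characterization of Proposition~\ref{prop:3} is also exactly what the paper does.

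The one piece you flag as ``the delicate part'' but do not actually supply is precisely the step the paper makes explicit: you need to verify that $b^*(v)$ genuinely varies with $v$, since otherwise a constant bid function could be matched by a single honored reserve. The paper closes this by applying the envelope theorem to Problem~\eqref{eq:fpa} to get ${b^*}'(v)=H(b^*(v))>0$, so $b^*$ is strictly increasing and hence $b^*(v)\neq b^*(v')$ for $v\neq v'$; with a non-degenerate $F$ the announced reserve must then differ from the realized bid for at least one type, and Lemma~\ref{lem:2} forces a readjustment there. Adding that one line makes your contradiction watertight. The intermediate musings in your first paragraph about indifference at $r=b$ and the auctioneer ``leaving $r$ below $b$'' are not needed and somewhat muddy the argument; the type-dependence observation you settle on at the end of that paragraph is the whole proof. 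The payoff-equivalence part is correct as written.
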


\begin{proof}
From Proposition~\ref{prop:2} we have that in any equilibrium, the bidder with valuation-type $v$ bids $b^*(v)$ and the auctioneer sets a final reserve price so that $r=b^*(v)$. From the envelope theorem we see from Problem~\eqref{eq:fpa} that ${b^*}'(v)  = H(b^*(v))$ which implies that $b^*(v)$ is increasing on $v$. This implies that the auctioneer's initial reserve price $r$ has to be different from at least one of $b^*(v)$ and $b^*(v')$ when $v\neq v'$. This implies that in, any equilibrium, the auctioneer readjusts the reserve price for at least one such bid. We conclude that the there is not a credible equilibrium in the $\ua$ subgame. 

The payoffs described in the proposition are a consequence of that, in any equilibrium, the bidder bids $b^*(v)$ gets all queries that have intrinsic prices less than $b^*(v)$ for a price $b^*(v)$.
\end{proof}

\medskip

\noindent{\bf The $\tcpa$ Subgame}\medskip

Similar to the previous analysis, we characterize the set of equilibria for the subgame where the bidder chooses the $\tcpa$ format.

\begin{lemma}\label{lem:4}
Consider the subgame where the bidder chooses the $\tcpa$ format and bids $T$. Then, if the reserve price $r$ announced at S1 is such that $r>0$, then the optimal decision for the auctioneer is to readjust the reserve price to $r=0$.
\end{lemma}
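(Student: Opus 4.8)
The plan is to reduce the auctioneer's problem to pure volume maximization and then show that any positive reserve can only shrink the volume. Since the bidder's $\tcpa$ constraint binds at the optimum, the total payment equals $T$ times the number of conversions, so $\pi(r|T;\tcpa)=T\,H(b(T;r))$ whenever $r\le T$ (and $0$ when $r>T$, a case already dominated by $r=0$). The target $T$ was fixed at S1 and $H$ is increasing, so maximizing revenue over $r$ is equivalent to maximizing the induced marginal bid $b(T;r)$. This isolates the economic point and the contrast with Lemma~\ref{lem:2}: in the $\ua$ format the auctioneer's lever is the per-query price (so it raises the reserve up to $b$), whereas in the $\tcpa$ format the per-conversion revenue is locked at $T$, so its only remaining lever is volume.

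Next I would pin down $b(T;r)$ as the binding point of the average-cost constraint. Writing the average cost of winning every query with $p_0(x)\le b$ under reserve $r$ as
\[
\mathrm{AC}(b;r)=\frac{rH(r)+\int_r^b z\,h(z)\,dz}{H(b)},\qquad b\ge r,
\]
I would check that $\mathrm{AC}(\cdot;r)$ is strictly increasing on $[r,\infty)$: its $b$-derivative has the sign of $b-\mathrm{AC}(b;r)>0$ (the familiar ``marginal above average'' fact, with $\mathrm{AC}(r;r)=r$). Hence $b(T;r)$ is the unique solution of $\mathrm{AC}(b;r)=T$, the largest bid the auto-bidder can sustain; in particular, since the average is strictly below the marginal, $b(T;r)>T$.

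The crux is to show that $\mathrm{AC}(b;r)$ is increasing in $r$ for each fixed $b$, so that the binding bid $b(T;r)$ moves left as $r$ grows. The clean identity is
\[
\mathrm{AC}(b;r)-\mathrm{AC}(b;0)=\frac{rH(r)+\int_r^b z h(z)\,dz-\int_0^b z h(z)\,dz}{H(b)}=\frac{\int_0^r (r-z)\,h(z)\,dz}{H(b)}\ge 0,
\]
which is strict once $H(r)>0$: raising the reserve replaces the intrinsic price $z<r$ of each cheap query by the higher floor $r$, inflating the average cost and tightening the $\tcpa$ constraint. Since $\mathrm{AC}(\cdot;r)$ is increasing in $b$ while $\mathrm{AC}(b;r)$ is increasing in $r$, the root of $\mathrm{AC}(b;r)=T$ satisfies $b(T;r)\le b(T;0)$, strictly when there is mass below $r$. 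Consequently $H(b(T;r))$, and with it the revenue $T\,H(b(T;r))$, is maximized at $r=0$. Equivalently, implicit differentiation gives $\frac{d}{dr}b(T;r)=H(r)/\big((T-b)h(b)\big)<0$, using $b(T;r)>T$.

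The main obstacle is the bookkeeping that makes $b(T;r)$ well defined and the monotonicity rigorous: establishing that $\mathrm{AC}(\cdot;r)$ is increasing so the constraint binds at a \emph{unique} bid (rather than on an interval or nowhere), and handling the corner cases $r>T$ (revenue $0$, hence dominated) and the full-allocation case where $b(T;r)$ reaches the top of the price support. Once $\mathrm{AC}$ is shown to be monotone in both arguments, the comparison is immediate and the conclusion $r=0$ follows.
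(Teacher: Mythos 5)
Your proof is correct and takes essentially the same route as the paper's: observe that $r>T$ yields zero revenue, that for $r\le T$ the revenue is $T\,H(b(T;r))$ with $T$ fixed and $H$ increasing, and that $b(T;r)$ is decreasing in $r$, so $r=0$ is optimal. The only difference is that the paper outsources the monotonicity and well-definedness of $b(T;r)$ to Lemma~\ref{lem:1} (proved via the function $g(b)=rH(r)+\int_r^b zh(z)\,dz - TH(b)$), whereas you reprove the same facts inline through the equivalent average-cost formulation $\mathrm{AC}(b;r)=g(b)/H(b)+T$.
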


\begin{proof}
Clearly, if the auctioneer would a set a reserve price $r\leq T$ otherwise he gets zero profits. For reserve price $r\leq T$, observe that the auctioneer's revenue is $T H(b(T;r))$. From Lemma~\ref{lem:1} we have that $b(T;r)$ is decreasing in $r$, and hence, since $H$ is increasing the optimal reserve price is $r=0$.
\end{proof}

The intuition behind this lemma is that the maximum price to pay is not predetermined as in $\ua$ case. Instead it is chosen by the auctioneer to meet the $\tcpa$ constraint. Because the auctioneer's revenue is proportional to the volume of queries allocated to the bidder, to maximize such volume, it is optimal to set reserve price $r=0$. Therefore, the bidder by letting the auctioneer bid on her behalf, makes the auctioneer internalize the negative effect of rent extraction via a reserve, since the latter leads to the auto-bidder decreasing the final marginal bid $b(T;r)$. 

\begin{proposition}[$\tcpa$ equivalent to SPA]\label{prop:tcpa}
The auction where the bidder chooses the $\tcpa$ format is equivalent to a SPA. For each valuation type $v$ the bidder submits a target $T^*(v)$ such that $b(T^*(v);0)= v$. That is, $T^*(v)$ solves
\begin{equation}\label{eq:tcpa}
H(v)(v- T^*(v))=\int_0^v H(z)dz.
\end{equation}
\end{proposition}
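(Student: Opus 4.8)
The plan is to let Lemma~\ref{lem:4} do the conceptual work and then reduce the proposition to a one-variable optimization of the bidder's profit. First I would invoke Lemma~\ref{lem:4}: since the auctioneer's optimal response to \emph{any} target $T$ is to reset the reserve to $r=0$, subgame perfection implies that the bidder correctly anticipates facing $r=0$ whatever she submits, and the realized mechanism is a second-price auction with zero reserve. This already delivers the qualitative ``equivalent to SPA'' claim at the level of the auction format; what remains is to pin down the bidder's optimal target and check that it reproduces the truthful SPA outcome.

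Next, with $r=0$ fixed, I would reparametrize the bidder's choice of target $T$ as a choice of the induced marginal bid $b := b(T;0)$. The binding $\tcpa$ constraint reads $T\,H(b) = \int_0^b z\,h(z)\,dz$. Since $\int_0^b z\,h(z)\,dz < b\,H(b)$ we get $T<b$, and implicit differentiation yields $\tfrac{db}{dT} = \tfrac{H(b)}{(b-T)\,h(b)} > 0$, so $T\mapsto b(T;0)$ is strictly increasing; combined with the monotonicity of $b(T;r)$ recorded in Lemma~\ref{lem:1}, the bidder can equivalently select any marginal bid $b$ directly. This legitimizes optimizing over $b$ rather than over $T$.

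Then I would write the bidder's profit as value minus payment. Winning every query with intrinsic price below $b$ gives value $v\,H(b)$ at a payment equal to the auctioneer's revenue $T\,H(b)$, which the binding constraint identifies as $\int_0^b z\,h(z)\,dz$; thus the bidder solves $\max_b \int_0^b (v-z)\,h(z)\,dz = \max_b\bigl(v\,H(b)-\int_0^b z\,h(z)\,dz\bigr)$. The integrand $(v-z)\,h(z)$ is positive for $z<v$ and negative for $z>v$, so the derivative $(v-b)\,h(b)$ vanishes exactly at $b=v$ and changes sign from $+$ to $-$ there; hence $b=v$ is the unique global maximizer. This is precisely the efficient, truthful SPA allocation, confirming the equivalence. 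Finally I would solve for the target inducing $b=v$: substituting into the binding constraint gives $T^*(v)\,H(v) = \int_0^v z\,h(z)\,dz$, and integration by parts, $\int_0^v z\,h(z)\,dz = v\,H(v)-\int_0^v H(z)\,dz$ (using $H(0)=0$), rearranges to the stated identity $H(v)\bigl(v-T^*(v)\bigr) = \int_0^v H(z)\,dz$.

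The genuine care-points, rather than any deep obstacle, are the reduction step and global optimality: one must verify that $T\mapsto b(T;0)$ is a monotone bijection onto the relevant range of marginal bids, so that the maximization over $b$ faithfully represents the bidder's problem over $T$, and one must use the single-crossing behavior of $(v-b)\,h(b)$ to promote the first-order critical point $b=v$ to a global maximum rather than merely a stationary point. Everything else is the substitution and integration-by-parts bookkeeping above.
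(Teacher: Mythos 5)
Your proposal is correct and follows essentially the same route as the paper: invoke Lemma~\ref{lem:4} to fix $r=0$ by subgame perfection, then argue the bidder optimally induces a marginal bid equal to her value and back out $T^*$ from the binding constraint via integration by parts. The paper simply asserts the truthful-bidding step, whereas you spell out the reparametrization from $T$ to $b$ and the single-crossing argument for global optimality of $b=v$ --- a worthwhile elaboration, but not a different proof.
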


\begin{proof}
Consider any equilibrium of the game. Because in stage S3, the best response of the auctioneer is to set a reserve price $r=0$ (Lemma~\ref{lem:4}), the bidder's optimal response at S2 consists in submitting a target so that the marginal bid equals to her valuation $v$ for a price landscape without reserve price. Hence, she submits a target $T^*$ as described in Equation~\eqref{eq:tcpa}.
\end{proof}

This proposition starkly contrasts with Proposition~\ref{prop:ua}. In the $\tcpa$ format, the bidder believes that the auctioneer will reduce the reserve price while, in the $\ua$ case, the bidder believes that the auctioneer will increase the reserve price to its bid. This is because in the $\tcpa$ format, once the bidder bids the target $T$, the auctioneer's incentives are fully aligned with the bidder's incentive:  the auctioneer only cares about maximizing the volume of sold queries. 

A second difference with $\ua$ format is that, in this case, the auctioneer sets a reserve price independently of the bidder's target. In particular, if the auctioneer announces at S0 a reserve price $r=0$, he can credibly commit to that price. 

The following proposition summarizes these findings.

\begin{proposition}\label{prop:6}
A credible equilibrium exists when the bidder chooses the $\tcpa$ format. The seller sets a initial reserve price $r=0$, the bidder bids $T^*$ (the solution to Equation~\eqref{eq:tcpa}). Moreover, every equilibrium is payoff equivalent inducing expected payoffs $\pi^*({\tcpa}) = \E_v[T^*(v) H(v) ]$, $u^*({\tcpa}) =\E_v[(v-T^*(v)) H(v) ]$ and expected welfare $W_{\tcpa}= \E_v[vH(v) ] $.
\end{proposition}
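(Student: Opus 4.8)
The plan is to derive both assertions almost directly from Lemma~\ref{lem:4} and Proposition~\ref{prop:tcpa}; the only genuine content is the credibility verification and some payoff bookkeeping, so I would keep the argument short and structural rather than computational.

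First I would exhibit the credible equilibrium. By Lemma~\ref{lem:4}, whenever the reserve announced at S0 is strictly positive the auctioneer strictly prefers to lower it to $r=0$ at S3, since the $\tcpa$ revenue $T\,H(b(T;r))$ is decreasing in $r$ (as $b(T;r)$ decreases in $r$ and $H$ is increasing). I would then observe that this same monotonicity shows $r=0$ is an \emph{unimprovable} best response at S3 for \emph{any} announced reserve, so in particular when $r=0$ was announced at S0 the auctioneer has no profitable deviation and leaves the reserve unchanged. That is precisely the credibility condition of the Definition. Given the anticipated final reserve $r=0$, Proposition~\ref{prop:tcpa} supplies the bidder's best response, namely the target $T^*(v)$ solving Equation~\eqref{eq:tcpa}. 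Hence the profile in which the auctioneer announces and keeps $r=0$ while the bidder bids $T^*(v)$ is a credible equilibrium.

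Next I would establish payoff equivalence. The key point is that Proposition~\ref{prop:tcpa} characterizes \emph{every} equilibrium of the $\tcpa$ subgame, credible or not: in each the auctioneer's final reserve is $0$ and the bidder's target induces a marginal bid equal to her value $v$. Consequently all equilibria share the same final allocation (the bidder wins exactly the queries with intrinsic price at most $v$, a mass $H(v)$) and therefore the same payment, so it suffices to compute the three payoffs from this common outcome. Substituting $r=0$ and $b(T^*(v);0)=v$ into $\pi(r\mid T;\tcpa)=T\,H(b(T;r))$ gives auctioneer revenue $T^*(v)H(v)$, and taking the expectation over $v$ yields $\pi^*(\tcpa)$. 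The realized welfare is the bidder's value for the queries she wins, $v H(v)$, giving $W_{\tcpa}$; the bidder's utility is welfare minus payment, $(v-T^*(v))H(v)$, giving $u^*(\tcpa)$ (which one may also cross-check against Equation~\eqref{eq:tcpa}, since it equals $\E_v[\int_0^v H(z)\,dz]$).

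There is no substantial obstacle here: the result is a corollary of the preceding two statements. The one place warranting care is the credibility step, namely confirming that $r=0$ is not merely optimal among positive reserves but is a best response that the auctioneer is content to keep, so that announcing $r=0$ at S0 survives as a credible equilibrium rather than merely as an equilibrium outcome reached by readjustment.
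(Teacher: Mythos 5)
Your proposal is correct and follows essentially the same route as the paper: exhibit the profile (announce $r=0$, bid $T^*$, keep $r=0$) and verify credibility via Lemma~\ref{lem:4}, then obtain payoff equivalence from the fact that Proposition~\ref{prop:tcpa} pins down the same final reserve, allocation, and binding-target payment in every equilibrium regardless of the announced reserve. Your additional care in checking that $r=0$ is a best response the auctioneer is content to keep, and the explicit payoff bookkeeping, simply fill in steps the paper leaves implicit.
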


\begin{proof}
The following is a credible equilibrium of the game: the auctioneer sets a reserve price $r=0$ at S1, the bidder chooses the $\tcpa$ format and submits the target $T^*$ which solves Equation~\ref{eq:tcpa} at stage S2, and the auctioneer keeps the reserve price $r=0$ at stage S3.

Furthermore, the subgame after S1 is the same for any initial reserve price that auctioneer announces at S1 (see Proposition~\ref{prop:tcpa}). Therefore, all equilibria are payoff equivalent. And the expected payoff are an immediate consequence from the fact for every type $v$, the target is $T^*(v)$ is binding in Problem~\eqref{eq:0} and hence she gets queries $\{x: p_0(x)\leq v\}$ for price $T^*(v)H(v)$. 
\end{proof}

 An important consequence of the above results is that without commitment, the auctioneer allocates the queries efficiently (i.e., maximizes welfare).\footnote{This result  is reminiscent of the Coasean literature, which, in a different context (durable good monopolist),  studies conditions in which a monopolist does not exercise his monopolistic power and allocates efficiently (see \citet{bulow82}).} 

\begin{corollary}\label{coro:tcpa}
In any equilibrium of $\tcpa$ subgame, the auctioneer efficiently allocates the queries.
\end{corollary}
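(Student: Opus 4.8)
The plan is to read the efficiency claim directly off the allocation characterization established in Proposition~\ref{prop:tcpa} and Proposition~\ref{prop:6}. The key observation is that efficiency here means selling to the bidder exactly those queries whose intrinsic price is below the bidder's true value, i.e.\ allocating the set $\{x : p_0(x) \le v\}$, since a query $x$ should be sold precisely when the social value of transacting, $v$ versus the opportunity cost $p_0(x)$, is nonnegative. So the entire task reduces to verifying that in every equilibrium of the $\tcpa$ subgame the realized allocation is this efficient set.

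\medskip

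\noindent First I would invoke Proposition~\ref{prop:6}, which asserts that \emph{every} equilibrium of the $\tcpa$ subgame is payoff equivalent and, in particular, that along the equilibrium path the auctioneer's final reserve price is $r=0$ (by Lemma~\ref{lem:4}, any $r>0$ is strictly suboptimal for the auctioneer, so sequential rationality forces $r=0$ regardless of what was announced at S1). Given $r=0$, Proposition~\ref{prop:tcpa} tells us the bidder submits the target $T^*(v)$ that makes the induced marginal bid satisfy $b(T^*(v);0)=v$. The next step is to translate this marginal-bid condition into a statement about which queries are won: with marginal bid equal to $v$ and reserve $0$, the auto-bidder wins exactly the queries with $p_0(x)\le v$. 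This is the efficient allocation, so I would conclude that the realized allocation coincides with the welfare-maximizing allocation query by query.

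\medskip

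\noindent The main (and only mild) obstacle is making precise why allocating $\{x: p_0(x)\le v\}$ is the welfare-maximizing allocation in this one-bidder model. This requires pinning down the welfare notion: welfare from selling query $x$ to the bidder is $v$ if it is allocated and the opportunity cost of not leaving it at its intrinsic price is $p_0(x)$, so the socially efficient rule sells $x$ iff $v \ge p_0(x)$. Since the $\tcpa$ equilibrium induces marginal bid exactly $v$ at reserve $0$, the won set is $\{x: p_0(x)\le v\}$, matching the efficient rule pointwise. I would also note the consistency check that the resulting welfare equals $W_{\tcpa}=\E_v[vH(v)]$ from Proposition~\ref{prop:6}, where $H(v)$ is the mass of queries with $p_0(x)\le v$; this is precisely $\E_v[\,v\cdot(\text{measure of efficiently sold queries})\,]$, confirming the allocation is efficient. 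Because all equilibria are payoff equivalent and the argument holds for each realized type $v$, the conclusion extends to every equilibrium, which is exactly what the corollary claims.
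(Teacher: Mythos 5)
Your proposal is correct and follows essentially the same route as the paper, which treats the corollary as an immediate consequence of Lemma~\ref{lem:4} and Propositions~\ref{prop:tcpa} and~\ref{prop:6}: in every equilibrium the final reserve is $r=0$, the induced marginal bid equals $v$, so the bidder wins exactly $\{x: p_0(x)\le v\}$, the welfare-maximizing set. Your additional consistency check against $W_{\tcpa}=\E_v[vH(v)]$ is a nice touch but does not change the argument.
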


\subsection{Utility, Welfare, and Revenue implications in the No-commitment game}
We start this section by showing our main result for the specific setting of Section~\ref{sec:3}: when the auctioneer lacks commitment, it is optimal for the bidder to bid according to the $\tcpa$ format. Thus, our result provides a rational explanation as to why quasilinear bidders opt for $\tcpa$ auto-bidding mechanisms.

\begin{theorem}\label{th:1}
In any equilibrium, for every type-valuation $v$, the bidder strictly prefer to choose the $\tcpa$ format over the $\ua$ format. Thus, $\pi^*_{NC}= \pi^*(\tcpa)$, $u^*_{NC}=u^*(\tcpa)$ and $W_{NC}=\E_{v}[vH(v)]$.  
\end{theorem}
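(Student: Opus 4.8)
The plan is to reduce the format-choice comparison to a pointwise, type-by-type inequality between the two continuation utilities, since the continuation play in each subgame has already been pinned down. By Proposition~\ref{prop:ua}, a bidder of type $v$ who selects the $\ua$ format ultimately plays the FPA and obtains $u_{\ua}(v) = (v-b^*(v))H(b^*(v))$, where $b^*(v)$ solves~\eqref{eq:fpa}. By Proposition~\ref{prop:6}, selecting the $\tcpa$ format yields $u_{\tcpa}(v) = (v-T^*(v))H(v)$, and the defining relation~\eqref{eq:tcpa} rewrites this as $u_{\tcpa}(v)=\int_0^v H(z)\,dz$. Hence it suffices to prove, for every type $v$, the single inequality $\int_0^v H(z)\,dz > (v-b^*(v))H(b^*(v))$.

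The key step is an elementary estimate exploiting that $H$ is (strictly) increasing. First I would record that the FPA bid is strictly shaded: the first-order condition of~\eqref{eq:fpa} gives $v-b^*(v)=H(b^*(v))/h(b^*(v))>0$, so $b^*(v)<v$. Writing $b=b^*(v)$ and splitting the integral at $b$,
\[
\int_0^v H(z)\,dz = \int_0^{b} H(z)\,dz + \int_{b}^{v} H(z)\,dz \;\ge\; \int_{b}^{v} H(z)\,dz \;>\; (v-b)\,H(b),
\]
where the first inequality uses $H\ge 0$ and the second uses $H(z)>H(b)$ for $z\in(b,v)$ together with $b<v$. This gives $u_{\tcpa}(v)>u_{\ua}(v)$ for every type $v$ that can win a positive volume of queries; for degenerate types below the support of the price landscape both utilities vanish.

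Finally I would aggregate. Because every (relevant) type strictly prefers the $\tcpa$ format at the format-selection stage, in any equilibrium the bidder chooses $\tcpa$ and then plays as in Proposition~\ref{prop:tcpa}. Consequently the equilibrium payoffs of the no-commitment game coincide with the $\tcpa$ payoffs of Proposition~\ref{prop:6}, i.e. $\pi^*_{NC}=\pi^*(\tcpa)$, $u^*_{NC}=u^*(\tcpa)$, and $W_{NC}=\E_v[vH(v)]$.

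I do not expect a genuine obstacle here: the equilibrium structure of both subgames is already established, so the only content is the dominance of the second-price-without-reserve outcome over the first-price outcome for the bidder, captured by the displayed inequality. The one point requiring care is strictness, which rests on two facts I would make explicit -- that the FPA bid is strictly shaded ($b^*(v)<v$) and that $H$ is strictly increasing on $(b^*(v),v)$ -- and on isolating the measure-zero degenerate types for which the preference is only weak.
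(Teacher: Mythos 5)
Your proof is correct, but it takes a genuinely different route from the paper's. The paper argues by a feasible-deviation (revealed preference) comparison: starting from the optimal $\ua$ bid $b^*(v)$, it considers the suboptimal $\tcpa$ strategy of submitting the target $T=b^*(v)$, and uses Lemma~\ref{lem:1} to note that dropping the reserve from $b^*(v)$ to $0$ strictly raises the induced marginal bid, hence the volume $H(b(b^*(v);0))>H(b^*(v))$ at the same per-conversion price, so $u^*(\ua|v)<u(b^*(v)|v;\tcpa)\le u^*(\tcpa|v)$. You instead compute both equilibrium payoffs in closed form, $u^*(\tcpa|v)=\int_0^v H(z)\,dz$ via \eqref{eq:tcpa} and $u^*(\ua|v)=(v-b^*(v))H(b^*(v))$, and verify the resulting integral inequality directly by splitting at $b^*(v)$ and using strict monotonicity of $H$. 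Both arguments are valid under the paper's standing assumption that $h>0$; your handling of strictness (shading $b^*(v)<v$ from the first-order condition of \eqref{eq:fpa}, plus the boundary cases) is careful and in fact your split-integral bound covers even the corner cases without the FOC. What each buys: your computation is more elementary and exhibits the exact utility gap $\int_0^{b^*(v)}H(z)\,dz+\int_{b^*(v)}^{v}\bigl(H(z)-H(b^*(v))\bigr)\,dz$, whereas the paper's deviation argument never needs the optimal target in closed form and is precisely the template that survives in the multi-bidder setting of Theorem~\ref{th:2}, where no closed-form equilibrium is available.
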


\begin{proof}
Consider $b^*(v)$ defined in Equation~\eqref{eq:fpa}. Then,
\begin{align*}
 u^*(\ua|v)  &= (v-b^*(v)) H(b^*(v))\\
    &= (v-b^*(v)) H(b^*(b^*(v);b^*(v)))\\
    &< (v-b^*(v)) H(b(b^*(v);0))
    \\&=u(b^*(v)|v;\tcpa)
    \\&\le u^*(\tcpa|v).
\end{align*}
The first equality is by definition of $b^*$ (Proposition~\ref{prop:ua}). The second equality holds because $H$ is increasing and, hence, with $\tcpa$ constraint of $b^*(v)$ and reserve price $b^*(v)$ the optimal bid is to buy all queries with price less or equal than $b^*(v)$ (i.e. $H(b^*(v))$). The first inequality holds due to Lemma~\ref{lem:1}. The next equality is by definition of bidding a target $T=b^*(v)$. The last inequality is by definition of $u^*(\tcpa)$, the bidder's payoff using the $\tcpa$ format with the optimal target. 
\end{proof}

Regarding the welfare implications, Corollary~\ref{coro:tcpa} shows that in the $\tcpa$ format, the final allocation is is welfare-optimal. On the other hand, because in the $\ua$ case the bidder shades her bid (i.e., $b^*(v)<v$), we have that the allocation is inefficient. We conclude that, from a welfare perspective, the $\tcpa$ format is preferable compared to the $\ua$ format.\medskip

\noindent{\bf Revenue implications}\medskip

Our previous results show that from the bidder's (and also welfare) perspective, when the auctioneer does not commit, the bidder prefers to use a $\tcpa$ mechanism over the $\ua$ mechanism. We now tackle the question from the auctioneer's angle. Does having the $\tcpa$ format cause a loss in revenue for the auctioneer?
The following result shows that, under some reasonable assumption on $H$, the auctioneer itself benefits from offering a $\tcpa$ format to the bidder.  

\begin{assumption}\label{ass:2}
$H$ satisfies that $vh(v)$ is non-decreasing.
\end{assumption}

Assumption~\ref{ass:2} implies that the marginal revenue to sell the queries at price $p$, $p h(p)$, is non-decreasing on $p$. This assumption is quite natural and holds in common settings, for instance, when $H$ is convex, in which case $h$ is non-decreasing. 

\begin{theorem}[Revenue Comparison]\label{prop:rev_comp}
 Suppose that Assumption~\ref{ass:2} holds. Then for every valuation-type $v$ we have that $\pi^*({\tcpa}|v)> \pi^*({\ua}|v)$. Furthermore, for every $\gamma>0$, we can find an instance $\langle F,H \rangle$ such that $\pi^*({\tcpa})>\gamma \cdot  \pi^*(\ua)$.
\end{theorem}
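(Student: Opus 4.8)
The plan is to prove the two assertions separately, since they are of a different nature: the first is a per-type pointwise revenue comparison that should follow from Assumption~\ref{ass:2} together with the characterizations already established, while the second is an unboundedness construction.

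\textbf{Step 1: the pointwise comparison $\pi^*(\tcpa|v) > \pi^*(\ua|v)$.} From Proposition~\ref{prop:6} and Proposition~\ref{prop:ua} I have closed forms: $\pi^*(\tcpa|v) = T^*(v)H(v)$ and $\pi^*(\ua|v) = b^*(v)H(b^*(v))$, where $T^*(v)$ solves Equation~\eqref{eq:tcpa} and $b^*(v)$ solves Problem~\eqref{eq:fpa}. First I would rewrite both revenues as integrals of the marginal-revenue density $z\,h(z)$. Using the binding $\tcpa$ identity $H(v)(v-T^*(v))=\int_0^v H(z)dz$, integration by parts gives $T^*(v)H(v) = vH(v) - \int_0^v H(z)dz = \int_0^v z\,h(z)\,dz$, so the $\tcpa$ revenue equals the full integral of marginal revenue up to $v$. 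For the $\ua$ side, the first-order condition of \eqref{eq:fpa} is $(v-b^*)h(b^*) = H(b^*)$, and I would show $b^*(v)H(b^*(v)) = \int_0^{b^*(v)} z\,h(z)\,dz + \big(b^*(v)H(b^*(v)) - \int_0^{b^*(v)} z\,h(z)dz\big)$; more cleanly, since $b^*(v)<v$ (bid shading) the $\tcpa$ revenue integrates marginal revenue over the strictly larger interval $[0,v]$ whereas the $\ua$ revenue is $b^*H(b^*)$. The role of Assumption~\ref{ass:2} (that $z\,h(z)$ is non-decreasing) is to guarantee the marginal-revenue density is nonnegative and increasing, so that extending the integration range from $b^*(v)$ up to $v$ adds strictly positive mass. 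The cleanest route is to compare $\int_0^v z\,h(z)dz$ against $b^*(v)H(b^*(v))$ directly: write $b^*H(b^*)=\int_0^{b^*} z\,h(z)dz + \int_0^{b^*}(b^*-z)h(z)dz$ is not quite it, so instead I would note $b^* H(b^*) \le \int_0^{b^*} b^* h(z)\,dz = b^* H(b^*)$ trivially, and use monotonicity of $z h(z)$ to bound $b^* H(b^*)=\int_0^{b^*} z h(z)\,dz + \int_0^{b^*}(b^*h(z)-zh(z))\,dz$. The key inequality to nail down is that $\int_0^{b^*}(b^* - z)h(z)\,dz \le \int_{b^*}^{v} z h(z)\,dz$ under Assumption~\ref{ass:2} and the FOC; this is exactly where the assumption does its work and is the main obstacle.

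\textbf{Step 2: the unbounded ratio.} For the second claim I would exhibit a parametric family $\langle F,H\rangle$ making $\pi^*(\ua)$ collapse relative to $\pi^*(\tcpa)$. The natural mechanism is to drive the optimal FPA bid $b^*(v)$ far below $v$, since revenue in the $\ua$ case is capped at $b^*(v)H(b^*(v))$ while $\tcpa$ revenue captures the whole marginal-revenue mass up to $v$. I would choose $H$ so that $H(b)$ is nearly flat over a long initial range and then jumps, so that the FPA objective $(v-b)H(b)$ is maximized at a very small $b^*$ (little extra volume is gained by bidding higher, so the bidder shades drastically), while a value distribution $F$ concentrated on large $v$ ensures $\int_0^v z h(z)dz$ picks up the bulk of the mass. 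Taking a one-parameter limit in which the flat region lengthens (or $H$ approaches a point mass near $v$) should send $\pi^*(\tcpa)/\pi^*(\ua)\to\infty$, which suffices to beat any fixed $\gamma$.

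\textbf{Expected difficulties.} The main obstacle is Step~1's inequality $\int_0^{b^*}(b^*-z)h(z)\,dz \le \int_{b^*}^{v} z\,h(z)\,dz$: it must be derived cleanly from the FPA first-order condition $(v-b^*)h(b^*)=H(b^*)$ together with the monotonicity of $z\,h(z)$, and getting strictness (rather than just $\ge$) requires care since $b^*(v)<v$ strictly. A secondary difficulty in Step~2 is verifying that the constructed $\langle F,H\rangle$ genuinely satisfies the earlier standing assumptions (Assumption~1's regularity on $F$ and Assumption~\ref{ass:2} on $H$) while still forcing the blow-up; I would either check these directly for the chosen family or note that Assumption~\ref{ass:2} is only needed for the first statement, so the construction need only respect the basic model hypotheses.
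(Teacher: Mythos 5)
Your Step 1 starts from the same closed forms the paper uses, namely $\pi^*(\tcpa|v)=T^*(v)H(v)=w(v)$ with $w(v):=\int_0^v zh(z)\,dz$ and $\pi^*(\ua|v)=b^*(v)H(b^*(v))$, but the ``key inequality'' you isolate is not a reduction at all: since $\int_0^{b^*}(b^*-z)h(z)\,dz=b^*H(b^*)-w(b^*)$ and $\int_{b^*}^{v}zh(z)\,dz=w(v)-w(b^*)$, the inequality $\int_0^{b^*}(b^*-z)h(z)\,dz\le\int_{b^*}^{v}zh(z)\,dz$ is \emph{verbatim} the claim $b^*H(b^*)\le w(v)$ you set out to prove, so flagging it as ``the main obstacle'' leaves the entire proof undone. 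The missing step is exactly the paper's one-line argument: Assumption~\ref{ass:2} makes $w'(z)=zh(z)$ non-decreasing, hence $w$ convex, so $w(v)\ge w(b^*)+w'(b^*)(v-b^*)$; the first-order condition of Problem~\eqref{eq:fpa}, $(v-b^*)h(b^*)=H(b^*)$, turns the tangent term into $w'(b^*)(v-b^*)=b^*h(b^*)(v-b^*)=b^*H(b^*)$; and strictness follows from $w(b^*)=\int_0^{b^*}zh(z)\,dz>0$ (note $b^*>0$ because the objective $(v-b)H(b)$ is increasing at $b=0$). You name the right ingredients (convexity of the marginal-revenue primitive plus the FOC) but never combine them, so as written Step 1 restates the goal rather than proving it.

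Your Step 2 heuristic also needs repair. If $H$ is genuinely flat and then jumps at some $p_0<v$, the FPA bidder simply bids $p_0$ and both formats earn $p_0$ per unit of mass, so the ratio is $1$, not large; and making $b^*$ ``very small'' is not the point, since $\pi^*(\ua|v)=b^*H(b^*)\ge w(b^*)$ always. What you need is a thin but strictly positive cheap segment that traps $b^*$ at an interior point capturing only a vanishing share of the marginal-revenue mass, plus a spike of $h$ just below $v$: for example $h=c$ on $[0,1-\delta]$ and $h=M$ on $[1-\delta,1]$ with $\delta=c/8$ and $c(1-\delta)+M\delta=1$ gives $b^*(1)=1/2$, $\pi^*(\ua|1)=c/4$, $\pi^*(\tcpa|1)\approx 1-c/2$, satisfies Assumption~\ref{ass:2} since $M\ge c$, and the ratio is of order $1/c$; pairing this with a regular $F$ concentrated near $v=1$ finishes the claim. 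A caution when comparing with the paper: its own instance $H_n(p)=p^n$ with uniform $F$ rests on the computation $b^*(v)=v/(n+1)$, but the FOC actually gives $b^*(v)=\tfrac{n}{n+1}v$, under which $\pi^*(\tcpa)/\pi^*(\ua)=(1+1/n)^n\to e$ and the ratio stays bounded; so do not pattern-match to that power family, and do verify (as you rightly propose) that whatever family you use satisfies the standing assumptions.
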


\begin{proof}
Consider the auxiliary function $w(v) := \int_0^v zh(z)dz$. Using integration by parts we have that $w(v)=vH(v)-\int_0^v H(z)dz$, and using the defintion of $T^*(v)$ in Equation~\eqref{eq:tcpa}, we get that $w(v)=T^*(v)H(v)$. 

Assumption~\ref{ass:2} implies that $w$ is a convex function. Thus, for every $v,v'$, we have that
\begin{equation}\label{eq:prs1} 
w(v)\ge w(v') + \underbrace{w'(v')}_{v'h(v')} (v-v').
\end{equation}

Take $v'=b^*(v)$, where $b^*(v)$ is the optimal bidding in the $\ua$ format. Taking the first order conditions on Problem~\eqref{eq:fpa} (the solution has to be interior in this problem), we have $h(b^*(v))(v-b^*(v)) = H(b^*(v))$. Therefore, replacing $v'$ in Equation~\eqref{eq:prs1} we obtain that 
$$ T^*(v)H(v) \ge w(b^*(v)) + b^*(v) H(b^*(v)).$$

Because $h>0$, we have that $w>0$. Therefore, we obtain that $T^*(v)H(v) > b^*(v) H(b^*(v))$, or equivalently, $\pi^*({\tcpa}|v)> \pi^*({\ua}|v)$.

To finish the proof, consider $F(v)=v$ with support in $[0,1]$ (uniform distribution) and $H_n(p) = p^n$ for $p\in [0,1]$. Simple computations shows that for every $v\in [0,1]$, $b^*(v) = \frac v {n+1}$, and hence, 
$$\pi^*({\ua}) =\int_0^1 \left(\frac v {n+1} \right)^n dv \leq \frac 1 {(n+1)^n}.$$
On the other hand, the solution to Equation~\eqref{eq:tcpa} is $T^*(v) = \frac{n}{n+1}v$. This implies that 
$$ \pi^*({\tcpa}) = \int_0^1 \frac{n}{n+1}v\cdot  v^{n} dv =\frac n {(n+1)(n+2)}.$$
Thus,
$$ \frac{\pi^*({\tcpa})}{\pi^*({\ua})} \geq \frac{\frac n {(n+1)(n+2)}} {\frac 1 {(n+1)^n}} = \frac {n (n+1)^{n-1}}{n+2}.$$
We conclude by taking $n$ large enough such that $\frac {n (n+1)^{n-1}}{n+2}>\gamma$.
\end{proof}

Theorem~\ref{prop:rev_comp} shows that the revenue on the SPA (the $\tcpa$ format) is not equivalent to the revenue obtained in the FPA (the $\ua$ format). At first glance, this result seems to contradict the well-known Revenue Equivalence Theorem (RET) between these auctions (see Chapter 1.3 of \citet{krishna2010} for a textbook treatment). While the result is true when the auctioneer only owns one query (in both auctions the revenue is simply $p_0(x)$), having uniform bidding among heterogeneous queries constraints the bidder to shade the bid so that she receives a suboptimal fraction of queries. Thus, the fraction of queries allocated in the SPA is larger than in the FPA, violating the main condition for the RET\footnote{If the bidder had knowledge of the pricing $p_0(x)$ and could bid independently in each of the queries, the FPA would allocate exactly the same as in the
SPA, and therefore, RET would apply in such a case.}.

Even though the condition imposed in Assumption~\ref{ass:2} considers a wide range of cases, the following instance -- that does not satisfy Assumption~\ref{ass:2} -- provides an example where the auctioneer prefers not to offer the $\tcpa$ mechanism to the bidder.

\begin{proposition}\label{prop:ce}
Suppose that the valuation type have support on $[1,\infty)$ and consider the instance
$$ \hat H(p) =\begin{cases}
0 &\mbox{ if } p < 1 \\
1-\frac 1 p &\mbox{ if } p\ge 1 
\end{cases}.
$$
Then, $\pi^*({\tcpa}|v)= \log(v)$ and $\pi^*({\ua}|v) = \sqrt v - 1$ which implies that $\pi^*({\ua}|v)>\pi^*({\tcpa}|v)$. In particular, for every $\gamma>0$ we can find a distribution $F$ so that the instance $\langle F,\hat  H \rangle $ is such that $\pi^*({\ua})>\gamma \cdot\pi^*({\tcpa})$.
\end{proposition}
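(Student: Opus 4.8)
The plan is to compute both revenue curves $\pi^*(\tcpa|v)$ and $\pi^*(\ua|v)$ in closed form directly from the equilibrium characterizations already established, and then exhibit a distribution $F$ that amplifies the pointwise gap into an unbounded revenue ratio. The only primitive I need about $\hat H$ is its density: differentiating $1-1/p$ gives $\hat h(p)=1/p^2$ for $p\ge 1$ (and $0$ below). As a sanity check this is exactly an instance outside the scope of Theorem~\ref{prop:rev_comp}, since $v\hat h(v)=1/v$ is decreasing and hence Assumption~\ref{ass:2} fails.

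For the $\tcpa$ revenue I will use the identity from the proof of Theorem~\ref{prop:rev_comp}: by Proposition~\ref{prop:6} together with the binding constraint~\eqref{eq:tcpa}, $\pi^*(\tcpa|v)=T^*(v)\hat H(v)=\int_0^v z\hat h(z)\,dz$. Substituting $\hat h$ gives $\int_1^v z\cdot z^{-2}\,dz=\int_1^v z^{-1}\,dz=\log v$, so $\pi^*(\tcpa|v)=\log v$. For the $\ua$ revenue, Proposition~\ref{prop:ua} gives $\pi^*(\ua|v)=b^*(v)\hat H(b^*(v))$ with $b^*(v)$ maximizing~\eqref{eq:fpa}; the interior first-order condition $(v-b)\hat h(b)=\hat H(b)$ reads $(v-b)/b^2=1-1/b$, which simplifies to $v=b^2$, hence $b^*(v)=\sqrt v$. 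Substituting yields $\pi^*(\ua|v)=\sqrt v\,(1-1/\sqrt v)=\sqrt v-1$.

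With the two curves in hand, the comparison reduces to the elementary inequality $\sqrt v-1>\log v$, and this is the step that requires the most care. It does \emph{not} hold for all $v>1$: the function $g(v)=\sqrt v-1-\log v$ satisfies $g(1)=0$, has its minimum at $v=4$ (where $g'(v)=\tfrac{1}{2\sqrt v}-\tfrac1v$ vanishes) with $g(4)=1-\log 4<0$, and only turns positive past a moderate threshold (near $v\approx 12$). So the pointwise domination $\pi^*(\ua|v)>\pi^*(\tcpa|v)$ is an eventual, large-$v$ phenomenon rather than a uniform one, and the write-up must restrict the comparison to that regime. For the aggregate ratio this is precisely the regime that matters: since $\sqrt v$ grows polynomially while $\log v$ grows only logarithmically, $(\sqrt v-1)/\log v\to\infty$. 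I will therefore take $F$ to place essentially all its mass on a single large value $V$ — a point mass at $V$, or, if a continuous type distribution is preferred, any $F$ supported on a narrow window $[V,2V]$ — so that $\pi^*(\ua)/\pi^*(\tcpa)\ge (\sqrt V-1)/\log(2V)$, and then pick $V=V(\gamma)$ large enough that this exceeds $\gamma$.

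The main obstacle, and the only genuinely non-mechanical point, is recognizing that the inequality underlying $\pi^*(\ua|v)>\pi^*(\tcpa|v)$ is not an identity valid throughout the support but holds only for large valuations; the construction of $F$ must concentrate mass where $\sqrt v$ has already overtaken $\log v$. Everything else — the two closed forms and the divergence of the ratio — is a routine computation once $\hat h$ is identified and the characterizations of Propositions~\ref{prop:ua} and~\ref{prop:6} are invoked.
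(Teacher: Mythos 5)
Your computations match the paper's proof step for step: the same first-order condition giving $b^*(v)=\sqrt v$, the same identity $T^*(v)\hat H(v)=\log v$ for the $\tcpa$ revenue, and the same device of concentrating $F$ on large valuations to make the ratio exceed any $\gamma$. The one place you go beyond the paper is your caveat about the pointwise comparison, and you are right to insist on it: $\sqrt v-1>\log v$ fails on roughly the interval $(1,12.3)$ (the gap is minimized at $v=4$, where it equals $1-\log 4<0$), so the proposition's blanket claim that $\pi^*(\ua|v)>\pi^*(\tcpa|v)$ holds on the whole support $[1,\infty)$ is not literally correct, and the paper's own proof glosses over this point (it even states the displayed inequality and the limit with $\ua$ and $\tcpa$ transposed). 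Your fix --- restricting the domination to large $v$ and choosing $F$ supported there, e.g.\ above a threshold $\underline v(\gamma)$ --- is exactly what is needed for the final conclusion $\pi^*(\ua)>\gamma\cdot\pi^*(\tcpa)$, and is in substance how the paper concludes as well.
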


\subsection{The value of commitment}
We finish this section by measuring the value to auctioneer of having commitment in the game. First, we measure the revenue loss of the auctioneer when the auctioneer does not have a commitment mechanism to keep the announced reserve prices, compared to the commitment benchmark. The second measure studies the revenue loss of the auctioneer when, even though he can commit not to change the reserve prices, the bidder has a mistaken belief that the auctioneer would actually readjust reserve prices\footnote{The bidder's decision on how to bid and which auto-bidding system to choose purely depends on her belief about the auctioneer's commitment. Thus, Theorem~\ref{th:1} does not rely on what the auctioneer is actually doing in terms of readjusting reserve prices, but instead, it relies on what the bidder believes the auctioneer is doing (see also, Remark~\ref{rem:bidder-belief}).}. All proofs of this section are delegated to Appendix~\ref{app:simple}.

\begin{proposition}[The value of commitment]\label{prop:vc}
Assume that the distribution $F$ has support in $[\underline v, \overline v]$. {Denote by $\psi =\frac{\pi_{\underline v\mbox{-}\spa}}{\pi_{\mye}}$, the relative revenue of selling one item with a SPA with reserve price $\underline v$ compared to sell it using revenue-optimal auction.} Then for every instance $\langle F, H\rangle$ we have that $\pi^*_{NC} \geq \psi \cdot \pi^*_C$. Moreover, the bound is tight: an instance $\langle F, H\rangle$ exists such that $\pi^*_{NC} = \psi \cdot \pi^*_C$
\end{proposition}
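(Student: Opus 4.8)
The plan is to reduce both revenue expressions to one–dimensional integrals of a single ``posted‑price revenue'' function against the price‑landscape distribution $H$, and then to exploit the regularity of $F$. Concretely, set $g(z) := z(1-F(z))$, the revenue of posting price $z$ to a single buyer with value $\sim F$. Since $g'(z) = (1-F(z)) - z f(z) = -f(z)\,\phi_F(z)$, regularity gives $g'>0$ for $z<r_{\mye}$ and $g'<0$ for $z>r_{\mye}$, so $\pi_{\mye} = \max_z g(z) = g(r_{\mye})$, while $\pi_{\underline v\mbox{-}\spa} = g(\underline v) = \underline v$; hence $\psi = g(\underline v)/g(r_{\mye})$ and $\psi\,\pi_{\mye} = \underline v$. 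Using $\pi^*_{NC} = \pi^*(\tcpa) = \E_v[\int_0^v z h(z)\,dz]$ (Theorem~\ref{th:1} and Proposition~\ref{prop:6}) and $\pi^*_C = \E_v[\mathbf 1_{\{v\ge r_{\mye}\}}(r_{\mye}H(r_{\mye}) + \int_{r_{\mye}}^v z h(z)\,dz)]$ (Proposition~\ref{prop:2}), I would swap the order of integration to obtain
\begin{align*}
\pi^*_{NC} = \int g(z)\,dH(z), \qquad \pi^*_C = \pi_{\mye}\,H(r_{\mye}) + \int_{(r_{\mye},\overline v]} g(z)\,dH(z).
\end{align*}

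With this representation the inequality becomes transparent. Since $\psi\le 1$, the two expressions share the tail term $\int_{(r_{\mye},\overline v]} g\,dH$ up to the factor $\psi$, and $(1-\psi)\int_{(r_{\mye},\overline v]} g\,dH \ge 0$; after cancelling, it suffices to prove $\int_{[\underline v,r_{\mye}]} g(z)\,dH(z) \ge \psi\,\pi_{\mye}\,H(r_{\mye}) = \underline v\,H(r_{\mye})$. This is now a pointwise statement about $g$ below the reserve: by the monotonicity of $g$ on $[\underline v, r_{\mye}]$ established above, $g(z)\ge g(\underline v)=\underline v$ for every $z$ in that interval, and integrating against $dH$ (whose mass on $[\underline v, r_{\mye}]$ is exactly $H(r_{\mye})$) yields the claim.

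For tightness I would take $H=\delta_{\underline v}$, the point mass placing all queries at price $\underline v$, paired with any $F$ for which $r_{\mye}>\underline v$ (so $\psi<1$). Then the tail term vanishes, $H(r_{\mye})=1$, and $\pi^*_{NC}=g(\underline v)=\underline v$ while $\pi^*_C=\pi_{\mye}$; hence $\pi^*_{NC}/\pi^*_C=\underline v/\pi_{\mye}=\psi$, matching the bound with equality.

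The main subtlety I anticipate is twofold. First, the Fubini reduction must be handled carefully at the atoms $z=\underline v$ and $z=r_{\mye}$, so that the mass of $H$ on $[\underline v, r_{\mye}]$ is correctly counted as $H(r_{\mye})$; this is precisely where identifying $\pi_{\mye}=g(r_{\mye})$ and $\pi_{\underline v\mbox{-}\spa}=g(\underline v)$ makes the factor $\psi$ appear naturally. Second, the argument uses $g(z)\ge \underline v$, which holds only for $z\ge\underline v$: if $H$ placed mass on prices below $\underline v$, where $g(z)=z<\underline v$, both the inequality and its tightness would break, so this step must invoke the model's restriction that the price landscape is supported in $[\underline v,\overline v]$. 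The regularity of $F$ is exactly what guarantees $g$ is increasing up to $r_{\mye}$, and is the crux that collapses the whole proposition into the single comparison $\min_{[\underline v,r_{\mye}]} g = g(\underline v)$.
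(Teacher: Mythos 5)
Your argument is correct and lands on the same two quantities as the paper, but it travels a genuinely different route. The paper works in ``value space'': it writes $\pi^*_{NC}$ and $\pi^*_C$ as virtual-value integrals $\E_v[\phi_F(v)H(v)]$ and $\E_v[\mathbf 1_{\{v\ge r_{\mye}\}}\phi_F(v)H(v)]$, then bounds their ratio by sandwiching $H(v)$ against $H(r_{\mye})$ on the two events and using $\pi_{\underline v\mbox{-}\spa}=\E_v[\phi_F(v)]$. You work in ``price space'': integrating the posted-price revenue curve $g(z)=z(1-F(z))$ against $dH$, so that $\psi\,\pi_{\mye}=g(\underline v)$ and the whole claim collapses to $g\ge g(\underline v)$ on $[\underline v,r_{\mye}]$, i.e.\ to regularity of $F$. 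The two representations are integration-by-parts duals, but yours buys something real: it makes explicit that the result hinges on $H$ placing no mass below $\underline v$, the caveat you flag at the end. That caveat is load-bearing, not cosmetic. The paper's third displayed line should read $\pi^*_{NC}=\E_v[\phi_F(v)H(v)]-\int_0^{\underline v}H(z)\,dz$ (minus, not plus), so its key step $\pi^*_{NC}\ge\E_v[\phi_F(v)H(v)]$ fails exactly when $H$ charges $[0,\underline v)$ --- and so does the proposition: for $F$ uniform on $[1,2]$ (so $r_{\mye}=\underline v=1$ and $\psi=1$) and $H(p)=p/2$ on $[0,2]$, one computes $\pi^*_{NC}=7/12$ while $\psi\,\pi^*_C=5/6$. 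So either the model restricts the price landscape to $[\underline v,\overline v]$, in which case both proofs go through (the paper's sign slip becomes harmless since $\int_0^{\underline v}H=0$), or the statement needs that hypothesis added.

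Two smaller repairs to your write-up. Your tightness instance $H=\delta_{\underline v}$ is not an admissible landscape under Assumption~\ref{ass:0} (the paper repeatedly uses that $H$ is increasing with positive density, e.g.\ in Lemma~\ref{lem:1} and Theorem~\ref{prop:rev_comp}); you should smooth it as the paper does, taking $H_\epsilon(z)=\mathbf 1_{\{z\ge\underline v\}}+\epsilon z$ and letting $\epsilon\to 0$, which establishes tightness only in the limit. And you should state explicitly that $g$ is nondecreasing on all of $[0,r_{\mye}]$, splitting into $g(z)=z$ below $\underline v$ and $g'=-f\,\phi_F>0$ on $[\underline v,r_{\mye}]$, since your comparison point $g(\underline v)$ sits at the boundary of the support of $F$.
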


Proposition~\ref{prop:vc} shows the more homogeneous impressions are, the higher is the loss of revenue for the auctioneer when he does not have commitment. 

\begin{proposition}[The value of showing commitment]\label{prop:pr}
We denote by $\pi^*_{WB}$, the the auctioneer's revenue when the bidder bids the non commitment optimal reserve price $T^*$ (see Equation~\eqref{eq:tcpa}) but the auctioneer keeps the reserve price to $r=r_{\mye}$. That is, $\pi^*_{WB}= \pi^*(r_{\mye}|T^*;\tcpa)$. Then, for every $\gamma>0$ an instance $\langle F, H \rangle$ exists such that $\pi^*_{C} > \gamma \cdot \pi^*_{WB}$.
\end{proposition}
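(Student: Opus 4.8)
The plan is to exhibit an explicit family of instances $\langle F, H\rangle$ on which the ratio $\pi^*_C/\pi^*_{WB}$ is unbounded. The intuition is that $\pi^*_{WB}$ measures what happens when the bidder, believing the auctioneer has no commitment, submits the target $T^*$ that induces marginal bid $v$ against a \emph{zero} reserve, but the auctioneer has actually kept the Myerson reserve $r_{\mye}$. Since the bidder computed $T^*$ under the wrong price landscape, the target $T^*$ is ``too low'' relative to what is needed to clear $r_{\mye}$; for a range of valuation types this mismatch causes the bidder to win essentially nothing (the induced marginal bid falls below $r_{\mye}$, or the target itself is below the reserve, so $\pi(r_{\mye}|T^*;\tcpa)=0$ by the revenue formula), while under full commitment those same types would transact and generate substantial revenue $\pi^*_C$. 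The gap between these two quantities is what I would blow up.

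Concretely, I would first fix a family $H_n$ (for instance $H_n(p)=p^n$ on $[0,1]$, reusing the computations already done in the proof of Theorem~\ref{prop:rev_comp}) together with a distribution $F$ whose support $[\underline v,\overline v]$ is chosen so that the Myerson reserve $r_{\mye}=\phi_F^{-1}(0)$ sits strictly inside the support and is large relative to the targets $T^*(v)$ the bidder submits. Recall from Equation~\eqref{eq:tcpa} that $T^*(v)$ solves $H(v)(v-T^*(v))=\int_0^v H(z)\,dz$, so for $H_n(p)=p^n$ one gets $T^*(v)=\tfrac{n}{n+1}v$; as $n$ grows the bidder's submitted target is pushed up toward $v$, but the induced allocation under a nonzero reserve collapses because $b(T^*;r_{\mye})$ is decreasing in $r_{\mye}$ (Lemma~\ref{lem:1}). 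I would then compute $\pi^*_{WB}=\pi^*(r_{\mye}|T^*;\tcpa)=T^*\,H(b(T^*;r_{\mye}))$ using the revenue formula $\pi(r|T;\tcpa)=T\,H(b(T;r))$ from the start of Section~\ref{sec:3}, and show it is driven to $0$ (or to an arbitrarily small quantity), while $\pi^*_C=\E_v[\mathbf{1}_{\{v\ge r_{\mye}\}}(vH(v)-\int_{r_{\mye}}^v H(z)\,dz)]$ from Proposition~\ref{prop:2}(i) stays bounded below by a positive constant. Taking $n$ large (or tuning $F$) then forces $\pi^*_C>\gamma\cdot\pi^*_{WB}$ for any prescribed $\gamma>0$.

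The main obstacle is ensuring the two quantities genuinely diverge rather than both shrinking at comparable rates: I must verify that the commitment revenue $\pi^*_C$ does not itself vanish as I push the parameters, since otherwise the ratio need not blow up. The cleanest route is to decouple the two effects — pick $F$ (equivalently $r_{\mye}$) so that $\pi^*_C$ is a fixed positive number independent of $n$, and then let $n\to\infty$ solely to crush the off-reserve allocation $H(b(T^*;r_{\mye}))$ underlying $\pi^*_{WB}$. I would therefore spend the bulk of the argument making the dependence of $b(T^*;r_{\mye})$ on $r_{\mye}$ explicit for the chosen $H_n$, solving the defining identity for $b(T;r)$ (the analogue of Equation~\eqref{eq:tcpa} but with lower limit $r$ in the integral), and bounding $H(b(T^*;r_{\mye}))$ from above by a quantity that tends to $0$ in $n$. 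Once that bound is in hand the unboundedness of the ratio is immediate, and I would close by recording the explicit instance witnessing $\pi^*_C>\gamma\cdot\pi^*_{WB}$.
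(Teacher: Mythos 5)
Your high-level mechanism is the right one, and it is the paper's: engineer the instance so that the target $T^*$ computed against a \emph{zero} reserve falls below the committed reserve $r_{\mye}$, which makes $\pi^*_{WB}=\pi^*(r_{\mye}|T^*;\tcpa)=0$ by the revenue formula (zero whenever $T<r$), while $\pi^*_C$ stays bounded away from zero. The gap is that the concrete family you propose does not deliver this, and your ``cleanest route'' actually runs into the very obstacle you flag. For $H_n(p)=p^n$ the zero-reserve target is $T^*(v)=\tfrac{n}{n+1}v$, which is \emph{close to} $v$, not small; so with a fixed $F$ and an interior $r_{\mye}$ (say uniform $F$, $r_{\mye}=1/2$) every type $v>\tfrac{n+1}{2n}$ still clears the reserve. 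Worse, for those types the reserve barely moves the induced bid: the defining identity for $b=b(T^*(v);r_{\mye})$ reduces to $b^{n}(v-b)=\tfrac1n r_{\mye}^{\,n+1}$, and since $b\ge T^*(v)\ge 1/2$ this forces $v-b\le \tfrac{1}{2n}$, hence $H_n(b)\ge (v-\tfrac{1}{2n})^n\ge e^{-1}v^n$ for $v\ge 1/2$. So $\pi^*_{WB}\ge e^{-1}\int_{1/2+O(1/n)}^{1}\tfrac{n}{n+1}v^{n+1}\,dv$, which is a constant fraction of $\pi^*_C\approx \int_{1/2}^{1}\tfrac{n}{n+1}v^{n+1}\,dv$ (Proposition~\ref{prop:2}(i)); the ratio stays bounded as $n\to\infty$. ``$b(T;r)$ decreasing in $r$'' (Lemma~\ref{lem:1}) is not enough: you need it to decrease by a lot, and here it does not. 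Your first paragraph's idea of choosing $F$ so that $r_{\mye}$ exceeds \emph{all} submitted targets would work, but with $H_n=p^n$ that forces $r_{\mye}>\tfrac{n}{n+1}\overline v$, i.e.\ $F$ must depend on $n$ --- contradicting your ``fix $F$, send $n\to\infty$'' plan --- and you never produce such an $F$.

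The paper's fix goes the other way: keep $F$ uniform (so $r_{\mye}=1/2$) and make the zero-reserve target uniformly tiny by concentrating almost all query mass at low prices. With $H_\epsilon(p)=\tfrac{1-\epsilon}{\epsilon}p$ for $p\le\epsilon$ and $H_\epsilon(p)=(1-\epsilon)+\epsilon p$ for $p\in[\epsilon,1]$, Equation~\eqref{eq:tcpa} gives $T^*(v)\le T^*(1)\le\epsilon<r_{\mye}$ for all $v$ (using that $T^*$ is increasing), so $\pi^*_{WB}=0$ exactly, while $\pi^*_C\ge r_{\mye}\bigl(1-F(r_{\mye})\bigr)H_\epsilon(r_{\mye})\ge\tfrac{1-\epsilon}{4}>0$, which beats $\gamma\cdot\pi^*_{WB}$ for every $\gamma$. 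So the statement you are trying to prove is reachable by your strategy, but as written the witnessing instance is missing and the one you sketch fails; you would need either the paper's $H_\epsilon$ or an $n$-dependent $F$ with $r_{\mye}$ pushed above $\sup_v T^*(v)$.
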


The last proposition shows the cost of having a misguided bidder in the auction. It further shows that if the auctioneer believes that bidder does not trust the auctioneer's commitment, then the rational choice for the auctioneer is to declare a reserve price of $0$ instead of $r_{\mye}$. The following remark re-emphasizes this point.

\begin{remark}
\label{rem:bidder-belief}
Our result (including the general version of Section~\ref{sec:gen}) that a bidder prefers the $\tcpa$-format only depends on the bidder's belief about whether the auctioneer can commit to the auction rules he declares. Therefore, even a committed auctioneer may benefit by offering $\tcpa$-like mechanisms as it increases the value for bidders who are skeptical about the auctioneer's commitment. 
\end{remark}

 \section{The general model}\label{sec:gen}

This section generalizes Theorem~\ref{th:1} to a setting where {\em the bidder} faces competition by other buyers (we will call them extra-buyers), that are interested in the queries $X$. 

We start the section by unfolding the intrinsic price $p_0$ when we have multiple buyers. In particular, we allow the intrinsic price to be a random variable whose uncertainty comes from the extra-buyers bidding strategies that may be private, the different conversion probabilities and the publishers' pricing constraint that may be unknown to the bidder. 

\subsection*{The intrinsic price with multiple buyers}

We consider $n$ extra-buyers (aside from our original bidder) participating in the auction. Each extra-buyer $i$ has a private valuation $v_i$ per conversion on the query, strategically chooses an auto-bidding format and submits a bid according to the format.
We denote by $\bm{\sigma}(\v) = (\sigma_1(v_1),\ldots,\sigma_n(v_n))$ the extra-buyers' strategies.  

Let $\b= (b_1,\ldots, b_n)$ be the final marginal bids of the extra-buyers; $\q(x)= (q_{0}(x), q_{j}(x))_{j=1}^n$ the probability of conversion on query $x\in X$ for the bidder and the buyers (respectively); and $p_B(x)$ the  pricing constraint set by the publisher owning query $x$.\footnote{The functions $\q, p_B$ are assumed to be integrable functions.} Then, in a second-price auction the realized intrinsic price the bidder faces for a conversion on query $x\in X$ is 
$$ p_0(x|\b,\q,p_B) =  \max\left\{ \frac{\max_{j=1,\ldots,n} b_j \cdot q_{j}(x)}{q_0(x)}, p_B(x)\right\}.$$
Hence, the realized price distribution is $$H(p| \b,\q,p_B) = \int_{\{x:\;  p_0(x|\b,\q,p_B)\leq p \}}q_{0}(x) d\mu(x).$$

From the bidder's perspective, we consider the following information structure. The bidder believes that extra-buyers valuations $\v = (v_i)_{i=1}^n$ are drawn independently, with distribution $F_i$ with support $[0,\overline v_i]$ for $i=1,\ldots,n$.\footnote{Because the valuations can be arbitrarily small, we have that for every valuation-type $v$ of the bidder, $\P [v q_0(x) > v_j q_j(x) \mbox{ for } x\in X \mbox{ (a.e.)} ]>0$. That is, with positive probability, the bidder by bidding her value is the strongest candidate in the auction.} The bidder also assesses that the conversion probabilities $\q$ and the pricing constraint $p_B$ are drawn according to $(\q, p_B)\sim G|_{v}$. We do not necessarily impose that extra-buyers are playing equilibrium strategies but instead impose that they are individually rational: an extra-buyer never bids above her valuation.
\begin{assumption}\label{as:2g}
For every Extra-Buyer $i$, we have that $\P[b_i \leq v_i \mbox{ for } i=1,\ldots,n] =1$, where $b_i$ is Extra-Buyer $i$'s final marginal bid.
\end{assumption}

Regarding the intrinsic price the bidder faces, observe that in the presence of extra-buyers using $\tcpa$ format, the final marginal of those $\tcpa$-extra buyers depends on the bidder's marginal bid $b_0$. Therefore, the intrinsic price the bidder faces is a random variable $p_0(\cdot|\omega_{v},b_0)$, where $\omega_v=(\bm{\sigma}(\v),\q,p_B)$ is the random variable containing the bidder's unknown terms. Consequently, the price distribution is a random variable $H(p|\omega_{v},b_0)$. This is precisely why the full model is harder to analyze than the model of Sec.~\ref{sec:3} -- the price distribution for the bidder is now a function of its own bid. 

We assume that $H$ satisfies the general version of Assumption~\ref{ass:0} for random variables.\footnote{All assumptions holding for $\omega_v$, are stated up to a zero measure set.}
\begin{assumption}\label{as:0g}
$H(p|\omega_v,b_0)$ satisfies Assumption~\ref{ass:0} for every $b_0$, $\omega_v$.
\end{assumption}
\subsection*{The value of the $\tcpa$ format}

After the previous prelude, we are now in a position to state our main result: in the non-commitment model, the bidder prefers to use the $\tcpa$ format. More precisely, we show that if the auctioneer can readjust the auction rule to set a per-query and personalized-per-bidder reserve prices, then the bidder's expected payoff using the $\tcpa$ format is larger than the expected payoff using the $\ua$ format. We also provide a companion result in the supplementary material showing that when the extra-buyers game is {\em symmetric} in a certain sense 
and the auctioneer is constrained to set a uniform reserve price across queries, then again, the bidder prefers to use the $\tcpa$ format.

Recall the notation that $u^*(\tcpa|v)$ denotes  the expected payoffs when the bidder with valuation $v$ chooses the $\tcpa$ format and submits an optimal target. Similarly, $u^*(\ua|v)$ is the expected payoff when the bidder chooses the $\ua$ format and submits an optimal marginal bid. As our main technical result, we show:
\begin{theorem}\label{th:2}
Suppose that the auctioneer can readjust reserve prices to per-query and personalized-per-bidder level. Then, $u^*(\tcpa|v) >  u^*(\ua|v)$.
\end{theorem}

\begin{proof}
Fix the valuation of the bidder to $v$ and let $b_{\tiny \ua}\in (0,v]$ be an arbitrary bid. 

Suppose that the bidder submits $b_{\tiny \ua}\in [0,v]$ using the $\ua$ format and assume that the auctioneer has chosen the optimal reserve prices (call this scenario ``world $M$"). We claim that the bidder weakly improves her payoff by bidding a target $T=b_{\tiny \ua}$ with the $\tcpa$ format (call this scenario ``world $T$") for every realization $\omega_v$. Furthermore, the inequality is strict for a positive measure of $\omega_v$. This claim proves the theorem, and we show the claim in the following four steps. \medskip

\noindent{\bf Step 1.} Let $X_{\tiny \ua}(
\omega_v)$ be the subset of queries that the bidder obtains in world M. Then $$u(b_{\tiny \ua}|\ua;v;\omega_v) = (v-b_{\tiny \ua})\int_{X_{\tiny \ua}(\omega_v)} q_0(x)d\mu(x),$$ where $\mu$ is the measure on the space of queries $X$. Indeed, by optimality of the auctioneer, the reserve price for the bidder on those queries must be $r=b_{\tiny \ua}$.

\noindent{\bf Step 2.1.} We prove that the revenue of the auctioneer in world $T$ 
is at least the revenue in world $M$. This is because one strategy for the auctioneer in world $T$ is to simply set a reserve price of $b_{\tiny \ua}$ for the bidder (equal to its target). Under this, the situation is identical to world $M$ for the bidder (due to Step 1) and for every buyer (because the auto-bidder for our bidder bids $b_{\tiny \ua}$ on all queries as the reserve is set to the target), yielding the same revenue as in world $M$\footnote{In case of multiplicity of equilibria, we assume that the same bidding equilibrium arises on worlds $M$ and $T$ as they are indistinguishable to the agents.}. 

\noindent{\bf Step 2.2.} We next prove that the revenue that the auctioneer obtains \emph{from the bidder} in world $T$ is greater or equal than the revenue from the bidder in world $M$. Suppose for the sake of a contradiction that this is not true.
Due to Step 2.1, this means that the revenue obtained from the extra bidders is higher in world $T$ than in world $M$.
We now leverage the fact that the auctioneer can set a reserve price at query/bidder level, in order to recreate the situation from world $T$ in world $M$. For each extra-buyer, the auctioneer can add a per-query personalized reserve equal to the 
bid the bidder submits for the query in world $T$. 
Moreover, for the queries $X_{\tiny \tcpa}(\omega_v)$ that the bidder wins in world $T$, the auctioneer can set a high reserve price on the extra-buyers so that the only feasible candidate is the bidder. 
With this simulation, the auctioneer can recreate the extra-buyers' bidding behavior from world $T$ in world $M$. Thus the auctioneer changed reserves to obtain the same revenue from the extra-buyers in world $M$ as in world $T$. In this way (only by changing reserves) one could increase the revenue in world $M$. This contradicts the optimality of the reserve prices the auctioneer chooses in world $M$.



\noindent{\bf Step 3.} It follows from Step 2.2 that the volume of conversions obtained by the bidder is higher in world $T$ than in world $M$. This is simply because the revenue from the bidder equals the average cost per conversion times the volume of conversions. The average cost-per-conversion in both the worlds is the same, equal to $b_{\tiny \ua}$ -- in world $T$ because we assume the target is binding (Assumption~\ref{as:0g} and Lemma~\ref{lem:1}), and in world $M$ because the reserve is set to the same value. Since the revenue from the bidder is higher in world $T$, we conclude that the volume of conversions is higher in world $T$.

\noindent{\bf Step 4.} We assert that $u(b_{\tiny \ua}|\tcpa;v;\omega_v)\geq u(b_{\tiny \ua}|\ua;v;\omega_v)$. Indeed, observe that
\begin{align}
    u(b_{\tiny \ua}|\tcpa;v;\omega_v) &= (v-b_{\tiny \ua})\int_{X_{\tiny \tcpa}(\omega_v)} q_0(x)d\mu(x) \nonumber\\
    &\geq (v-b_{\tiny \ua})\int_{X_{\tiny \ua}(\omega_v)}q_0(x)d\mu_(x) \nonumber
    \\&= u(b_{\tiny \ua}|\ua;v;\omega_v) \nonumber
\end{align}
where the first equality holds because the target of the bidder in world $T$ is set to $b_{\tiny \ua}$, and the $\tcpa$ constraint is binding (from Assumption~\ref{as:0g} and Lemma~\ref{lem:1}). The first inequality is from Step 3.  The final equality is from Step 1.

\noindent{\bf Step 5.} In step 4, we already proved the weak inequality from the theorem; now we prove the strict inequality. We prove that there exists a positive measure of events $\omega_v$ such that $u(b_{\tiny \ua}|\tcpa;v;\omega_v)> u(b_{\tiny \ua}|\ua;v;\omega_v)$. Indeed, because 
$b_{\tiny \ua}>0$, a positive measure of events $\omega_v$ exists such that the extra-buyers' valuations are small enough so that for every query $x\in X$, $\max_{i=1,\ldots,n} v_i q_i(x) < b_{\tiny \ua} q_0$. Hence, by Assumption~\ref{as:2g} the auctioneer never allocates queries to extra-buyers. 
Thus, in world $M$, the auctioneer's revenue is $b_{\tiny \ua} \cdot H(b_{\tiny \ua}|\omega_v)$. In world $T$, if the bidder bids a target $T=b_{\tiny \ua}$, then by reducing the bidder's reserve price to $r<b_{\tiny \ua}$ we have the final marginal bid is $b(b_{\tiny \ua};r)>b_{\tiny \ua}$. Thus, the auctioneer obtains a revenue of $T\cdot H(b(b_{\tiny \ua};r)|\omega_v) = b_{\tiny \ua} \cdot H(b(b_{\tiny \ua};r)|\omega_v)$. This is strictly larger than the revenue in world $M$ due to Assumption~\ref{as:0g}. We conclude that auctioneer sets a reserve price $r<b_{\tiny \ua}$ in world $T$. Thus, we obtain
\begin{align*} 
u(b_{\tiny \ua}|\tcpa;v;\omega_v) &= (v- b_{\tiny \ua}) H(b(b_{\tiny \ua};r)|\omega_v) \\
 &> (v- b_{\tiny \ua}) H(b_{\tiny \ua}) = u(b_{\tiny \ua}|\ua;v;\omega_v).
\end{align*}

\medskip

To conclude the proof of the theorem, from Step 4. and Step 5. we derive that
\begin{align*}
u(b_{\tiny \ua}|\ua;v) &= \E_{\omega_v}[u(b_{\tiny \ua}|\ua;v;\omega_v)] \\
&< \E_{\omega_v}[u(b_{\tiny \ua}|\tcpa;v;\omega_v)]  \leq u^*(\tcpa|v).
\end{align*}
Since $b_{\tiny \ua}$ is arbitrary, we get that $ u^*(\tcpa|v)< u^*(\ua|v)$.\end{proof}

Theorem~\ref{th:2} strongly relies on the auctioneer's ability to readjust the reserve price at the query and bidder level. However, when the set of queries is large, readjusting the reserve prices for each query may turn out to be too expensive. In this kind of situation, when the auctioneer is constrained to set a uniform reserve price, we provide a companion result showing that the bidder still prefers the $\tcpa$ format under some symmetry condition on the extra-buyers game (see the supplemental material for details).

\subsection{Uniform reserves}
The proof for Theorem~\ref{th:2} strongly relies on the auctioneer's ability to readjust the reserve price at the query and bidder level. However, when the set of queries is large, readjusting the reserve prices for each query may turn out to be too expensive for the auctioneer. In this kind of situation, when the auctioneer is constrained to set a uniform reserve price, Theorem~\ref{th:3} shows that the bidder still prefers the $\tcpa$ format so long as the extra-buyers game is symmetric. We remark that this result does not rely on how the auctioneer readjusts the extra buyers' reserve prices.\footnote{This weaker condition on how the auctioneer behaves with the extra-buyers allows to include cases like when some extra-buyers are budgeted constrained, and hence, the auctioneer cannot readjust their reserve prices.}

When dealing with uniform reserve prices, the key technical challenge compared to Theorem~\ref{th:2} is that the auctioneer cannot replicate the effect of the bidder's bidding on the remaining extra-buyers by setting personalized uniform reserve prices. Thus, when the auctioneer readjusts the bidder's reserve price not only the bidder's marginal bid changes but also the marginal bids of extra-buyers using a $\tcpa$ format. To tackle this problem, we assume that game for extra-buyers using the $\tcpa$ format is symmetric.

\begin{definition}\label{def:sym}
The extra-buyers' game is $\tcpa$-symmetric if for every $\omega_v$ and Extra-Buyers $i,j$ using the $\tcpa$ format, we have that their final marginal bids $b_i$, $b_j$ are the same.\footnote{A sufficient condition for those marginal bids to be the same is that (i) both bidders have the same target ($T_i=T_j$) and (ii) that for every query $x$ there exists a query $x'$ such that $q_{i}(x)=q_{j}(x')$.}
\end{definition}

\begin{remark}
When there is only one extra-buyer is in the auction, the game is $\tcpa$-symmetric. 
\end{remark}

We are now in position to present Theorem~\ref{th:3}.
\begin{theorem}\label{th:3}
Suppose that the auctioneer is constrained to set a uniform reserve price to the bidder and that the extra-buyers game is $\tcpa$-symmetric. Then, $u^*(\tcpa|v) >  u^*(\ua|v)$.
\end{theorem}
The proof of Theorem~\ref{th:3} is similar in spirit to that of Theorem~\ref{th:2}, but now we can not use the power of the personalized per-query reserve prices to perform the step where we``simulate world $T$ in world $M$". Instead of such a simulation, we show that in a $\tcpa$-symmetric game, there is a structural property of the bidding behavior in equilibrium, which allows us to prove the result.
We defer the proof to Appendix~\ref{app:unif-reserve}.

\section{Conclusion}
This paper attempts to explain why rational bidders (with quasi-linear utilities) choose bidding formats which at first glance are not optimal. The crux of the argument lies in the notion of lack of commitment -- the auctioneer can change (ex post) the rules of the declared auction -- once we treat the auction and bidding setting as a multi-stage game. It turns out that in a game without commitment, it is rational for the quasi-linear bidder to choose the seemingly suboptimal $\tcpa$ format over the classical $\ua$ format. We prove this in two different settings: a simpler setting with one bidder and exogenous prices, and then in the general model with endogenous prices in the auction based on bids of other buyers. In the simpler model, we also prove that the auctioneer's revenue is higher with the $\tcpa$ format in the no-commitment game compared to the $\ua$ format, under certain mild conditions. The general model requires more technically involved proofs but the insight is the same: in a world without commitment or with a lack of belief in the other player's commitment, the $\tcpa$ format aligns incentives better.  We emphasize that the core issue is in fact not the auctioneer's commitment, but the bidder's belief in the same. In the simpler model, we also provide bounds on the value of commitment, and on the loss due to the lack of belief of a bidder in a committed auctioneer. We also note that the problem of commitment may also be overcome in practice via other mechanisms such as verification in a repeated auction setting and via audits.
\bibliographystyle{ACM-Reference-Format}
\bibliography{bibliography}

\clearpage
\appendix
\section{Missing Proofs from Section~\ref{sec:model}}

\begin{proof}[Proof of Lemma~\ref{lem:1}]
First, notice that Problem~\eqref{eq:0} has a unique solution since $H$ is an increasing function (Assumption~\ref{ass:0}). 

Second, for existence of solution, observe that $b=T$ is feasible solution. This implies that the optimal solution has $b(T;r)\geq T$. Also, by writing the constraint as function of $b$, $g(b) = r H(r) + \int_{r}^b zh(z)dz - TH(b)$ we have that $g$ is increasing on $b>T$ since $g'(b) = (b-T)h(b)$ which is strictly by assumption on $H$. Next, using integration by parts on the integral term in $g$, we rewrite $g(b) = (b-T)H(b) - \int_{r}^b H(z)dz$. Because $\int_{r}^{\infty} H(z)dz<\infty$ (see Assumption~\ref{ass:0}), we have that $\lim_{b\to \infty} g(b) =\infty$. We conclude that Problem~\eqref{eq:0} has a unique solution satisfying $g(b(T;r))=0$.

The previous paragraph shows that Equation~\eqref{eq:1} has a unique solution for $b\in [T,\infty)$ which is $b(T;r)$. For $b<T$, observe that the left-hand-side of Equation~\eqref{eq:1} is negative while the right-hand-side is positive. Therefore, Equation~\eqref{eq:1} has a unique solution for $b\geq r$.

To conclude the proof of the lemma observe that $\partial_T g(b|T) = - H(b) <0$. This implies that for $T<T'$ $g(b(T;r)|T')<g(b(T;r)|T)= 0$. Because $g(\cdot|T')$ is increasing, we conclude that $b(T';r)>b(T;r)$. The same logic holds to show that $b(T;r)$ is decreasing in $r$ since $\partial_r g(b|r)= H(r)>0$.  
\end{proof}

\section{Missing Proofs from Section~\ref{sec:3}}\label{app:simple}

\begin{proof}[Proof of Proposition~\ref{prop:1}]
For the proof we denote by $\underline v, \overline v$ the lowest and upper elements in the support of $F$.

From the revelation principle, without loss of generality we restrict to incentive compatible mechanisms $(\chi(v),\tau(v))$.

The incentive compatible condition on the mechanism implies that the bidder's utility satisfies that
$$ vH(\chi(v)) - \tau(v) = \max_{v'} vH(\chi(v')) - \tau(v').$$

Using the envelope approach \citep{milg02}, we have that the incentive compatibility restriction is equivalent to (i) $\chi(v)$ being non-decreasing (since $H$ is increasing) and (ii) that for almost every type $v$ we have that 
$$\tau(v) = vH(\chi(v)) - \int_{\underline v}^v H(\chi(z))dz \qquad \mbox{ (IC)}.$$

The previous characterization of $\tau$ allow us to reformulate the auctioneer's revenue as
\begin{align*}\int_{\underline v}^{\overline v} \left(vH(\chi(v)) - \int_{\underline v}^v H(\chi(z))dz\right) & f(v)dv =\\ 
&\int_{\underline v}^{\overline v} \phi_F(v) H(\chi(v)) f(v)dv,\end{align*}
where the equality comes from using integration by parts.\footnote{The integrability condition on $H$ (Assumption~\ref{ass:0}) ensures that the integration by parts can be taken for the case where the support of the distribution $F$ is unbounded.}

The intrinsic price condition imposes that for every query, the price which is sold has to be at least the respective intrinsic price. This, in terms of $H$, implies that for every feasible transfer rule $\tau$ and every valuation-types $v>v'$ the following inequality holds:
$$ \tau(v)- \tau(v') \geq \int_{\chi(v')}^{\chi(v)}zh(z)dz \qquad \mbox{ (FC)}.$$

We claim that if $\chi$ satisfies the incentive-compatibility constraint (IC) and the feasibility constraint (FC) then, up to a zero-measure set of types, $\chi(v)\leq v$. To see this, since $\chi$ is non-decreasing then is differentiable almost everywhere (see Theorem 7.20 \cite{rudin}). Consequently, $\tau$ is also differentiable almost everywhere. The (IC) condition implies that $\tau'(v) = vh(\chi(v))\chi'(v)$, while the (FC) condition implies that $\tau'(v) \geq \chi(v) h(\chi(v))\chi'(v)$. We conclude that $\chi(v)\leq v$ for almost every valuation type $v$.    

To conclude the proof, consider the following relaxation of the auctioneer's problem 
\begin{align*}
    \max_{\chi(\cdot)}\quad &\E_v \left[H(\chi(v)) \;\phi_F(v)\right] \\
    \mbox{s.t. }\quad & \chi(v)\leq v \quad (a.e.).
\end{align*}

Because $H$ is increasing and $\phi_F$ is increasing, the above problem admits a pointwise solution which is 
$$ \chi^*(v)= \begin{cases}
0 &\mbox{ if } v < r_{\mye} \\
v &\mbox{ if } v\geq r_{\mye}
\end{cases}.
$$

We conclude the proof by showing that $\chi^*$ is solution for the auctioneer's problem. Indeed, first observe that $\chi^*$ is monotone, and hence, $\chi^*$ satisfies condition (i) in the incentive-compatibility restriction. Moreover, the transfer rule generated by $\chi^*$ is such that for $v\geq r_{\mye}$,
$$\tau^*(v) = vH(v) - \int_{r_{\mye}}^v H(z)dz = r_{\mye} H(r_{\mye}) +\int_{r_{\mye}}^v z h(z)dz, $$
where we use integration by parts for the second equality. Clearly, $\tau^*$ satisfies the feasibility constraint (FC). We conclude that $\chi^*$ is a feasible solution for the auctioneer's problem, and therefore, $(\chi^*,\tau^*)$ is the revenue-maximizing mechanism. 
\end{proof}

\begin{proof}[Proof of Proposition~\ref{prop:ce}]
From Equation~\eqref{eq:tcpa}, we have that $T^*(v) = v- \frac{\int_0^v \hat H(z)dz}{\hat H(v)} = \frac{\log(v)}{\hat H(v)}$. Thus, $\pi^*(\tcpa|v) = T^*(v) \hat H(v) = \log(v)$. Similarly, the first order conditions in Problem~\eqref{eq:fpa} implies that $b^*(v)$ solves
$$ (v-b^*(v))\hat h(b^*(v))= \hat H(b^*(v)).$$
Replacing $\hat H$, we obtain that the unique solution is $b^*(v)=\sqrt{v}$. We conclude since $\pi^*(\ua|v)= b^*(v)H(b^*(v)) = \sqrt{v}-1$. 

Observe that for every $v\geq 1$, we have that $\pi^*(\tcpa|v)>\pi^*(\ua|v)$ and $\lim_{v\to \infty} \frac{\pi^*(\tcpa|v)}{\pi^*(\ua|v)}=\infty$. Therefore for every $\gamma>0$, we can find a $\underline v>0$ large enough so that if all types in the support of the distribution $F$ are greater than $\underline v$, we have that $\pi^*({\ua})>\gamma \cdot\pi^*({\tcpa})$.
\end{proof}

\begin{proof}[Proof of Proposition~\ref{prop:vc}]
Let $\underline v, \overline v$ the smallest and largest elements in the support of $F$. Using that $\pi^*_{NC}= \pi^*({\tcpa})$, we get that
\begin{align*}\pi^*_{NC}&= \E_{v}[T^*(v) H(v)]\\
&= \int_{\underline v}^{\overline v} \left(vH(v) - \int_0^v H(z) dz\right) f(v)dv\\
&= \int_0^{\underline v} H(z)dz + \E_{v}[\phi(v)H(v)]\\
&\geq \underbrace{\E_{v}[\phi(v)H(v)]}.
\end{align*}
The second equality comes from the characterization of $T^*$ in Equation~\eqref{eq:tcpa}. The thrid equality is a simple integration by parts argument. The inequality last holds because $H$ is nonnegative.

On the other hand, using integration by parts in the characterization of $\pi^*_C$ on Proposition~\ref{prop:2}, we get that
$$\pi^*_C =  \E_{v}[\mathbf{1}_{\{v\geq r_{\mye}\} } \phi(v)H(v)].$$

Therefore, we conclude that 
\begin{align*}
     \frac{\pi^*_{NC}}{\pi^*_{C}} &= 1 + \frac{\E_{v}[\mathbf{1}_{\{v< r_{\mye}\} }\phi_F(v)H(v)]}{\E_{v}[\mathbf{1}_{\{v\geq r_{\mye}\} }\phi_F(v)H(v)]}\\
     &\geq  1 + \frac{ H(r_{\mye}) \E[\mathbf{1}_{\{v< r_{\mye}\} }\phi_F(v)]}{H(r_{\mye})  \E_{v}[\mathbf{1}_{\{v\geq r_{\mye}\} }\phi_F(v)]}\\
     &\geq 1 + \frac{ \E[\mathbf{1}_{\{v< r_{\mye}\} }\phi_F(v)]}{  \E_{v}[\mathbf{1}_{\{v\geq r_{\mye}\} }\phi_F(v)]}\\
     & = 1 + \frac{\pi_{\underline v\mbox{-}\spa} - \pi_{\mye}}{\pi_{\mye}} \\
     &= \psi
     \end{align*}.
The first two inequalities hold because $\E_{v}[\mathbf{1}_{\{v\geq r_{\mye}\} }\phi_F(v)]<0$ and $H$ is positive and increasing. The second to last equality comes from the fact that $\pi_{\underline v\mbox{-}\spa} = \E[\phi_F(v)]$.     
     
To show the tightness of our result, consider $H_\epsilon(z) = \mathbf{1}_{\{z\geq \underline v\}}+ \epsilon z$. As $\epsilon\to 0$, we have that $\pi^*_{NC}\to \E[\phi(v)]=\pi_{\underline v\mbox{-}\spa}$ while $\pi^*_C\to\pi_{\mye}$.
\end{proof}

\begin{proof}[Proof of Proposition~\ref{prop:pr}]
For $\epsilon>0$, consider the following instance. $F(v)= v$ (uniform distribution) and
\begin{equation}\label{eq:h_eps}
H_\epsilon(p) = \begin{cases}
\frac{1-\epsilon}\epsilon p &\mbox { if } p\leq \epsilon\\
(1-\epsilon) +\epsilon p &\mbox{ if } p\in[\epsilon, 1]
\end{cases}.
\end{equation}
From Proposition~\ref{prop:1}, we have that $r_{\mye} = 1/2$. Also, from Equation~\eqref{eq:tcpa} we have that the optimal target $T^*_\epsilon(1)\leq \epsilon$. Hence for $\epsilon$ small enough, we have that $T^*_\epsilon(1)<r_{\mye}$, and thus, for every $v\in [0,1]$ we have that $T^*_\epsilon(v)< T^*_\epsilon(1)<r_{\mye}$ since $T^*(v)$ is increasing in $v$ (see Lemma~\ref{lem:1}). Therefore, $\pi^*_{WB}=0$. On the other hand, $\pi^*_{NC}\geq  r_{\mye} (1-F(r_{\mye})) (1-\epsilon)$. We conclude that for $\epsilon$ small enough $\pi^*_{C} > \gamma \cdot \pi^*_{WB}$, independently of the factor $\gamma$.
\end{proof}

\label{app:symmetric}

\section{Proof of Theorem~\ref{th:3}}\label{app:unif-reserve}

Before proving the theorem, we require the following technical lemma.

\setcounter{lemma}{3}
\begin{lemma}[No Swapping Lemma]\label{lem:swap}
Consider two final marginal bids profile $(b_0,\b)$ and $(b'_0,\b')$ and a particular realization $\omega_v$. Consider a directed graph $G=(V,E)$, where each vertex $v$ is one of the participants on the auction and the edge $e=(i,j)$ exists if and only if for profile $(b_0,\b)$ agent $i$ wins query $x$ and for the profile $(b'_0,\b')$ agent $j$ wins query $x$. Then the graph $G$ is acyclical. 
\end{lemma}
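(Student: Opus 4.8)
The plan is to prove the No Swapping Lemma by contradiction, assuming a cycle exists in $G$ and deriving an impossibility from the structure of second-price allocations. First I would set up notation carefully: a directed edge $e=(i,j)$ records that there is some query won by agent $i$ under the profile $(b_0,\b)$ and by agent $j$ under the profile $(b'_0,\b')$. Suppose toward a contradiction that $G$ contains a directed cycle $i_1 \to i_2 \to \cdots \to i_k \to i_1$. By definition of the edges, for each consecutive pair there is a query $x_\ell$ won by $i_\ell$ under $(b_0,\b)$ and by $i_{\ell+1}$ under $(b'_0,\b')$ (indices mod $k$).

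The key idea I would exploit is that the winner of a query in a second-price auction is determined by who has the highest \emph{effective} bid $b_j q_j(x)/q_0(x)$ (or the analogous per-conversion quantity), and the ordering of effective bids among agents is governed by the ratio of their marginal bids. Concretely, on query $x_\ell$ agent $i_\ell$ outbids $i_{\ell+1}$ under the unprimed profile, so $b_{i_\ell} q_{i_\ell}(x_\ell) \geq b_{i_{\ell+1}} q_{i_{\ell+1}}(x_\ell)$; whereas under the primed profile agent $i_{\ell+1}$ outbids $i_\ell$ on that same query, giving $b'_{i_{\ell+1}} q_{i_{\ell+1}}(x_\ell) \geq b'_{i_\ell} q_{i_\ell}(x_\ell)$. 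Dividing these two inequalities on query $x_\ell$ eliminates the query-dependent conversion terms $q_{i_\ell}(x_\ell)$ and $q_{i_{\ell+1}}(x_\ell)$, yielding a clean comparison between the ratios $b_{i_\ell}/b'_{i_\ell}$ and $b_{i_{\ell+1}}/b'_{i_{\ell+1}}$. The plan is to show each cycle edge forces $b_{i_\ell}/b'_{i_\ell} \geq b_{i_{\ell+1}}/b'_{i_{\ell+1}}$ (up to the direction of the inequalities), so that multiplying these around the whole cycle forces every ratio to be equal, collapsing all the inequalities to equalities and contradicting the strictness that ought to hold when a genuine swap occurs (i.e.\ the winner strictly changes).

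The main obstacle I expect is handling the auctioneer and the non-strict (tie) cases cleanly. Ties in second-price auctions mean the defining inequalities are weak, so a naive telescoping only yields a cycle of equalities rather than an outright contradiction; I would need a tie-breaking convention or an argument that a genuine \emph{change} of winner requires a strict inequality somewhere in the cycle, which then cannot telescope back consistently. A further subtlety is that the vertex set includes the bidder, the extra-buyers, \emph{and} the auctioneer (unsold/reserve outcomes): an edge into or out of the ``auctioneer'' vertex corresponds to a query going from sold to unsold or vice versa as the reserve/bids change, so I would treat the reserve price as a virtual bid and verify the ratio-comparison argument still goes through, or argue separately that such edges cannot participate in a cycle because the reserve acts as a fixed threshold. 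Once the telescoping-around-the-cycle contradiction is established in the generic strict case and the tie cases are dispatched by convention, the acyclicity of $G$ follows, completing the proof.
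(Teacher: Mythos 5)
Your proposal is correct and takes essentially the same approach as the paper: assume a cycle, write the two winner-determination inequalities on each swapped query (one per bid profile), and telescope around the cycle to reach a contradiction — your division trick yielding $b_{i_\ell}/b'_{i_\ell}\geq b_{i_{\ell+1}}/b'_{i_{\ell+1}}$ is just an algebraic repackaging of the paper's comparison of the product $\prod_j q_{i_{j+1}}(x_{i_j})/q_{i_j}(x_{i_j})$ under the two profiles. Your attention to ties is reasonable but the paper simply takes winner determination to be strict, so no extra machinery is needed.
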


\begin{proof}
Suppose that for some $\omega_v$, $G$ contains a cycle.

Then consider the sequence of bidders in the auction $i_1,\ldots, i_k, i_1$ creating the cycle. Let $x_{i_j}$ the query that $i_j$ wins with the bid profile $(b_0,\b)$ and that $i_{j+1}$ wins when the bid profile is $(b'_0,\b')$. Observe that $b_{i_j}q_{i_j}(x_{i_j})>b_{i_{j+1}}q_{i_{j+1}}(x_{i_j})$. Therefore, we derive that
\begin{equation}\label{eq:aux} b_{i_1} > b_{i_{2}} \frac{q_{i_2}(x_{i_1})}{q_{i_1}(x_{i_1})} > b_{i_1} \prod_{j=1}^k \frac{q_{i_{j+1}}(x_{i_j})}{q_{i_{j}}(x_{i_j})}.
\end{equation}
where $k+1 = i$.

Using the same logic for the bidding profile $(b'_0,\b')$ we obtain that
\begin{equation}\label{eq:aux2}
b'_{i_1} < b'_{i_{2}} \frac{q_{i_2}(x_{i_1})}{q_{i_1}(x_{i_1})} < b'_{i_1} \prod_{j=1}^k \frac{q_{i_{j+1}}(x_{i_j})}{q_{i_{j}}(x_{i_j})}.
\end{equation}

Equations \eqref{eq:aux} and \eqref{eq:aux2} generate a contradiction. We conclude that $G$ does not have any cycle.
\end{proof}

We are now in a position to prove Theorem~\ref{th:3}. 

\begin{proof}[Proof of Theorem~\ref{th:3}]
The proof strategy is similar to Theorem~3. Given an arbitrary marginal bid $b_{\tiny \ua}$, we want to show that if the bidder submits $b_{\tiny \ua}\in [0,v]$ using the $\ua$ format, the bidder can weakly improves her payoff by bidding a target $T=b_{\tiny \ua}$ with the $\tcpa$ format for every realization $\omega_v$. Furthermore, the inequality is strict for a positive measure of $\omega_v$. We split our proof into the following steps.  \medskip

Let $X^0_{\tiny \ua}(
\omega_v)$ and $X^0_{\tiny \ua}(
\omega_v)$ the subset of queries that the bidder obtains in the $\ua$ and $\tcpa$ cases, respectively. We assert that  $X^0_{\tiny \ua}(\omega_v) \subseteq X^0_{\tiny \tcpa}(\omega_v)$.  Suppose for the sake of a contradiction that is not true. Then, consider $x \in X^0_{\tiny \ua}(\omega_v) \setminus X^0_{\tiny \ua}(\omega_v)$. Let $b_{\tiny \ua}$ and $b'_{\tiny \ua}$ the bidder's final marginal bid for the $\ua$ and $\tcpa$ cases, respectively. From Lemma~1 we have that $b'_{\tiny \ua}\geq b_{\tiny \ua}$. Therefore, since the bidder is losing query $x$ in the $\tcpa$ case an extra-buyer $i$ has increased her final marginal bid from $b_i$ to $b'_i$ and wins query $x$. Thus, for the graph $G$ described in Lemma~\ref{lem:swap} we have an edge $e=(0,i)$.

Because the marginal bid of Extra-Buyer $i$ is not constant, it implies that she is using the $\tcpa$ format. We assert that there is a query $x'\in X^0_{\tiny \ua}(\omega_v) \setminus X^0_{\tiny \tcpa}(\omega_v)$. Because $\ua$ extra-buyers do not change their marginal bids and $\tcpa$-extra buyers have a higher marginal bid in the $\tcpa$ case (since $b'_i > b_i$ and the game of $\tcpa$-buyers is symmetric), the cost for every query is (weakly) higher in the $\tcpa$ case than in the $\ua$ case. In particular, the average cost-per-acquisition (CPA) of wining queries $X^0_{\tiny \ua}(\omega_v)$ is at least $T_i$. Moreover, because the queries in $X\setminus X^0_{\tiny \ua}(\omega_v)$ have a CPA higher than the $T_i$ (since $b_i\geq T_i$ but the Extra-Buyer $i$ did not win those queries in the $\ua$ case), we have that in order to win query $x$ and, at the same time, keep an average CPA no more than $T_i$, Extra-Buyer $i$ is loosing a query $x'\in X^0_{\tiny \ua}(\omega_v) \setminus X^0_{\tiny \tcpa}(\omega_v)$. If the winner of that query is the bidder, then the edge $e= (i,0)$ belongs to $G$. From Lemma~\ref{lem:swap} this is not possible. Suppose the winner of the query is a different Extra-Buyer $k$. In that case, we can reiterate the logic of this paragraph and either obtain a cycle on $G$ or, again, find another extra buyer winning a new query. Because there is a finite set of extra-buyers, at some point in the iteration $G$ will have a cycle. This is a contradiction. Therefore, $X^0_{\tiny \ua}(\omega_v) \subseteq X^0_{\tiny \tcpa}(\omega_v)$.

To finish the proof, we follow the same reasoning as the one used for the proof of Theorem~3. We use Step 3. to show the weak inequality for every $\omega_v$ and Step 4 to show that the inequality is strict for a positive measure of $\omega_v$. 
\end{proof}

\end{document}